\numberwithin{equation}{section}
\numberwithin{figure}{section}
\theoremstyle{plain}
\newtheorem{thm}{Theorem}[section]
\newtheorem{cor}[thm]{Corollary}
\newtheorem{lem}[thm]{Lemma}
\theoremstyle{definition}
\newtheorem{defn}[thm]{Definition}
\newtheorem{rmk}[thm]{Remark}
\DeclareFixedFont{\ttb}{T1}{txtt}{bx}{n}{12} 
\DeclareFixedFont{\ttm}{T1}{txtt}{m}{n}{12}  
\definecolor{deepblue}{rgb}{0,0,0.5}
\definecolor{deepred}{rgb}{0.6,0,0}
\definecolor{deepgreen}{rgb}{0,0.5,0}
\newcommand{\E}{\mathbb{E}}
\renewcommand{\P}{\mathbb{P}}
\newcommand{\dirty}{\mathrel{\widehat<}}
\newcommand{\nil}{\textsc{nil}}
\newcommand{\mkd}[1]{m_{#1}^{\textup{dirty}}}
\newcommand{\mkc}[1]{m_{#1}^{\textup{clean}}}
\newcommand{\cand}{\textup{cand}}
\DeclareMathOperator{\rt}{root}
\DeclareMathOperator{\lft}{left}
\DeclareMathOperator{\rht}{right}
\title{Sorting with Predictions}
\author{
  Xingjian Bai \\
    Department of Computer Science\\
    University of Oxford, UK\\
  \texttt{xingjian.bai@sjc.ox.ac.uk} \\
  \and
  Christian Coester \\
    Department of Computer Science\\
    University of Oxford, UK\\
  \texttt{christian.coester@cs.ox.ac.uk}
}
\begin{document}

\maketitle

\begin{abstract}
We explore the fundamental problem of sorting through the lens of learning-augmented algorithms, where algorithms can leverage possibly erroneous predictions to improve their efficiency. 
We consider two different settings: In the first setting, each item is provided a prediction of its position in the sorted list. In the second setting, we assume there is a ``quick-and-dirty'' way of comparing items, in addition to slow-and-exact comparisons. 
For both settings, we design new and simple algorithms using only $O(\sum_i \log \eta_i)$ exact comparisons, where $\eta_i$ is a suitably defined prediction error for the $i$th element. 
In particular, as the quality of predictions deteriorates, the number of comparisons degrades smoothly from $O(n)$ to $O(n\log n)$. 
We prove that this comparison complexity is theoretically optimal with respect to the examined error measures. 
An experimental evaluation against existing adaptive and non-adaptive sorting algorithms demonstrates the potential of applying learning-augmented algorithms in sorting tasks.
\end{abstract}


\section{Introduction}
Sorting is one of the most basic algorithmic problems, commonly featured as one of the initial topics in computer science education, and with a vast array of applications spanning various domains.
In recent years, the emerging field of algorithms with predictions \citep{CompetitiveCaching,algo_with_predictions}, also known as learning-augmented algorithms, has opened up new possibilities for algorithmic improvement, where algorithms aim to leverage predictions (possibly generated through machine learning, or otherwise) to improve their performance. 
However, the classical sorting problem with predictions, along with the discussion of different types of predictors for sorting, appears to have been largely overlooked by this recent movement.
This paper explores the problem of sorting through the lens of algorithms with predictions in two settings, aiming to overcome the classical $\Omega(n\log n)$ barrier with the aid of various types of predictors.


The first setting involves each item having a prediction of its position in the sorted list. 
This type of predictor is commonly found in real-world scenarios. 
For instance, empirical estimations of element distribution can generate positional predictions. 
Another example is that a fixed set of items has their ranking evolve over time, with minor changes at each timestep. 
Here, an outdated ranking can serve as a natural prediction for the current ranking. 
As re-evaluating the true relation between items can be costly, the provided positional predictions offer useful information. The positional prediction setting is closely related to \emph{adaptive sorting} of inputs with existing presortedness \citep{survey_on_adaptive_sorting}, but we consider different measures of error on the predictor, resulting in algorithms with a more fine-grained complexity.

In the second setting, a ``dirty'' comparison function is provided to assist sorting. 
In biological experiments, for instance, some ``indicating factors'' might be used to approximately compare two molecules or drugs. 
Despite potential errors due to the oversight of minor factors, these comparisons can still offer preliminary insights into the properties of the subjects. 
More broadly, in experimental science, researchers often carry out costly experiments to compare subject behaviours. 
By utilizing a proficient sorting algorithm that capitalizes on dirty comparisons, the need for costly experiments can be reduced and substituted by less expensive, albeit noisier, experiments. 

We propose sorting algorithms to leverage either type of predictor. In the positional prediction setting, we design two deterministic algorithms with different complexity bounds, while in the dirty comparisons setting, we develop a randomized algorithm. In all settings, we provide bounds of the form $O(\sum_{i=1}^n\log(\eta_i+2))$ on the number of exact comparisons, for different notions of element-wise prediction errors $\eta_i\in[0,n]$.
In particular, all three proposed algorithms only require $O(n)$ exact comparisons if predictions are accurate (\emph{consistency}), never use more than $O(n\log n)$ comparison regardless of prediction quality (\emph{robustness}), and their performance degrades slowly as a function of prediction error (\emph{smoothness}). 
Moreover, we show that all algorithms have optimal comparison complexity with respect to the error measures examined.

Finally, through experiments on both synthetic and real-world data, we evaluate the proposed algorithms against existing (adaptive and non-adaptive) algorithms. Results demonstrate their superiority over the baselines in multiple scenarios.

\subsection{Preliminaries}
Let $A = \langle a_1, \ldots, a_n\rangle$ be an array of $n$ items, equipped with a strict linear order $<$. Let $p\colon [n]\to[n]$ be the permutation that maps each index $i$ to the position of $a_i$ in the sorted list; that is, $a_{p^{-1}(1)}<a_{p^{-1}(2)}<\dots<a_{p^{-1}(n)}$.
We consider two settings of sorting with predictions.




\paragraph{Sorting with Positional Predictions.}
In \emph{sorting with positional predictions}, the algorithm receives for each item $a_i$ a prediction $\hat{p}(i)$ of its position $p(i)$ in the sorted list. We allow $\hat{p}$ to be any function $[n]\to[n]$, which need not be a permutation (i.e., it is possible that $\hat p(i)=\hat p(j)$ for some $i\ne j$). 


Positional predictions can be generated by models that roughly sort the items, e.g. focusing on major factors while neglecting minor ones.
Or they can stem from past item rankings, while the properties of the items evolve over time. In such cases, the objective is to obtain the latest ranking of items.

The error of a positional prediction can be naturally quantified by the displacement of each element's prediction; that is, the absolute difference of the predicted ranking and the true ranking. We define the \textit{displacement error} of item $a_i$ as
\begin{align*}
\eta^{\Delta}_i := |\hat p(i) - p(i)|.
\end{align*}


The following notion of one-sided error provides an alternative perspective to evaluate the complexity of algorithms with positional predictions. We denote the left-error and right-error of item $a_i$ as 
\begin{align*}
    \eta^l_i &:= \left|\{j \in [n]\colon \hat p(j) \leq \hat p(i) \land p(j) > p(i)\}\right|\\
    \eta^r_i &:= \left|\{j \in [n]\colon \hat p(j) \geq \hat p(i) \land p(j) < p(i)\}\right|.
\end{align*}

In certain contexts, it may be impossible to obtain a predictor with small displacement error, but possible to obtain one with a small one-sided error. By developing algorithms for this setting, we expand the space of problems where sorting algorithms with predictions can be applied.

\paragraph{Sorting with Dirty and Clean Comparisons.}

The other setting we consider involves a predictor that estimates which of two elements is larger without conducting a proper comparison, providing a faster but possibly inaccurate result.
This type of predictor is applicable in scenarios where exact comparisons are costly but a rough estimate of the comparison outcome can be obtained more easily.

Formally, in \emph{sorting with dirty comparisons}, the algorithm has access to a complete, asymmetric relation $\dirty$ on the items, while still also having access to the exact comparisons $<$. That is, for any two distinct items $a_i$ and $a_j$, either $a_i\dirty a_j$ or $a_j\dirty a_i$. We think of $\dirty$ as an unreliable, but much faster to evaluate prediction of $<$. We also refer to the (fast) comparisons according to $\dirty$ as \emph{dirty} whereas the (slow) comparisons according to $<$ are \emph{clean}. We emphasize that the relation $\dirty$ need \emph{not be transitive}, so it is not necessarily a linear order. Instead, $\dirty$ induces a tournament graph on the items of $A$, containing a directed edge $(a_i,a_j)$ if and only if $a_i\dirty a_j$, and in many applications, we expect this graph to have cycles.

We denote by $\eta_i$ the number of incorrect dirty comparisons involving $a_i$, that is,
\begin{align*}
\eta_i := \left|\{j \in [n]\colon (a_i < a_j) \ne (a_i \dirty a_j) \}\right|.
\end{align*}


\subsection{Main Results}
Our main result for the dirty comparison setting is given by the following theorem:

\begin{thm}
\label{thm_Sorting_with_dirty_comparisons}
Augmented with dirty comparisons, there is a randomized algorithm that sorts an array within $O(n \log n)$ running time, $O(n \log n)$ queries to dirty comparisons, and $O\left(\sum_{i = 1}^{n} \log\left(\eta_i + 2\right)\right)$ clean comparisons in expectation.
\end{thm}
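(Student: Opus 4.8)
The plan is to build the sorted order incrementally by insertion, using the dirty comparisons to guide each insertion so that the number of clean comparisons spent on element $a_i$ is roughly logarithmic in its error $\eta_i$. Concretely, I would maintain the current sorted prefix as a balanced binary search tree (say a randomized treap or a weight-balanced tree) so that the ``dirty rank'' of a new element can be located quickly; the randomization in the theorem statement suggests that random priorities or random insertion order are used to control tree depth and hence the dirty-comparison budget. For each newly inserted element $a_i$, the algorithm would first use dirty comparisons to walk down the tree and find the position it \emph{predicts} for $a_i$, then use clean comparisons to verify and correct this position by expanding a search interval around the predicted slot.

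The key technical device is an \emph{exponential (galloping) search} around the dirty-predicted position: having placed $a_i$ at its dirty-rank $\hat r$, do clean comparisons at distances $1, 2, 4, 8, \dots$ to the left and right of $\hat r$ until the true position is bracketed, then binary-search within the final bracket. If the true rank of $a_i$ among the already-inserted elements differs from its dirty rank by $d_i$, this costs $O(\log(d_i + 2))$ clean comparisons. The heart of the argument is then to show $\sum_i d_i = O(\sum_i \eta_i)$, or more precisely that $\sum_i \log(d_i+2) = O(\sum_i \log(\eta_i+2))$ in expectation. The point is that the discrepancy between an element's dirty rank (within the inserted prefix) and its true rank is caused precisely by elements with which it has an incorrect dirty comparison: each such ``inversion'' between $a_i$ and some $a_j$ contributes at most one to the displacement. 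So $d_i \le$ (number of inserted $a_j$ with $(a_i<a_j)\neq(a_i\dirty a_j)$) $\le \eta_i$, giving the bound deterministically for the clean comparisons in a static insertion order; randomization enters to make the tree operations and the dirty-comparison count $O(n\log n)$ with the right expectation.

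One subtlety to handle carefully: when we correct $a_i$'s position using clean comparisons, we should insert it at its \emph{true} position, but the tree is organized by true order of the already-inserted elements, so subsequent dirty searches descend a tree that is correctly sorted — the dirty relation is only consulted to \emph{choose a branch}, and a wrong dirty comparison at a node may send the search into the wrong subtree. I would argue that each wrong dirty comparison encountered on the root-to-leaf path of $a_i$ can displace the landing position by at most the size of one subtree hanging off that path, but by using a balanced tree and charging via the one-sided error counts $\eta^l,\eta^r$-style accounting (or directly via $\eta_i$), the total displacement is still controlled; alternatively, and more cleanly, one re-locates using the dirty comparisons only to get a \emph{candidate} index and then always does the galloping clean search from there, so correctness is automatic and only the efficiency bound $d_i \le \eta_i$ must be shown. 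The main obstacle I anticipate is precisely this efficiency analysis for the clean comparisons: making rigorous that the gap between dirty-predicted rank and true rank is bounded by the local error $\eta_i$ (and not by some global quantity), and then combining the per-element $O(\log(\eta_i+2))$ bound with the treap depth analysis to get the claimed $O(\sum_i \log(\eta_i+2))$ expected clean comparisons together with $O(n\log n)$ time and dirty queries.
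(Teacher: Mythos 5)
Your high-level architecture matches the paper's: insert in uniformly random order into a BST, descend with dirty comparisons, then repair with clean ones. But the efficiency step on which everything hinges is wrong. You claim that the displacement $d_i$ between the dirty-predicted slot and the true slot satisfies $d_i \le \eta_i$ deterministically, because ``each inversion contributes at most one to the displacement.'' That reasoning only applies to the \emph{full} dirty rank --- the number of already-inserted $a_j$ with $a_j \dirty a_i$ --- which would cost $\Theta(m)$ dirty comparisons per insertion (hence $\Theta(n^2)$ total), not the $O(n\log n)$ the theorem allows. When you instead obtain the predicted slot by a $O(\log m)$-length BST descent, a single wrong dirty comparison near the root sends the search into the wrong subtree, and the landing position can be off by the \emph{size of that subtree} --- potentially $\Theta(n)$ even when $\eta_i = 1$. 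You half-notice this (``displace the landing position by at most the size of one subtree hanging off that path'') and then wave at ``$\eta^l,\eta^r$-style accounting,'' but there is no deterministic bound of the form $d_i \le \eta_i$ here, and the galloping cost $\log(d_i+2)$ is not pointwise controlled by $\log(\eta_i+2)$.

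What actually rescues the bound --- and what your sketch is missing --- is a probabilistic argument exploiting the random insertion order. Roughly: at the step where the dirty descent first ``goes wrong'' (in the paper, the step $t^*+1$ after the last $t^*$ with $L_{t^*}<a_i<R_{t^*}$), the pivot must be one of the at most $\eta_i$ items with a wrong dirty comparison against $a_i$; since the pivot of a subtree of size $s$ is a uniformly random element of that subtree (by random insertion order), the probability that this bad event happens at a subtree of size $\approx 2^k$ is at most $\eta_i/2^{k-1}$. Combined with the standard fact that the random-order BST has only $O(1)$ expected levels with subtree size in $(2^{k-1},2^k]$, this yields $\E[\log(s_{t^*}+1)] = O(\log(\eta_i+1))$, and then the clean-repair cost (whether by backtracking along the dirty path as in the paper, or by your galloping from the descent endpoint, since $d_i \le s_{t^*}$) is $O(\log(\eta_i+2))$ \emph{in expectation}, not pointwise. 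Without this pivot-probability lemma and the subsequent expectation calculation, your per-element bound does not follow, so there is a genuine gap in the proposal at precisely the place you flagged as ``the main obstacle.''
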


By the classical lower bound, the total number of dirty+clean comparisons must be at least $\Omega(n\log n)$ regardless of prediction quality, but for sufficiently good predictions the theorem allows to replace most clean comparisons by dirty ones.

The following theorem generalizes the previous one to the case that there are $k$ different dirty comparison operators:

\begin{thm}\label{thm:multiple}
Augmented with \(k\le 2^{O(n/\log n)}\) dirty comparison predictors, where the error of the $p$th predictor is denoted by $\eta^p$, there is a randomized algorithm that sorts an array with at most \(O(\min_p \sum_{i=1}^{n} \log(\eta_i^p + 2))\) clean comparisons.
\end{thm}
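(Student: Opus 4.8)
The plan is to reduce the $k$-predictor case to the single-predictor algorithm of Theorem \ref{thm_Sorting_with_dirty_comparisons} by running the single-predictor algorithm as a black box, but feeding it a dirty comparison operator that is chosen adaptively so as to (approximately) match whichever of the $k$ predictors is best. The conceptual difficulty is that we do not know in advance which predictor $p$ minimizes $\sum_{i=1}^n\log(\eta_i^p+2)$, and we cannot afford to evaluate the error of any predictor exactly, since that would itself require $\Omega(n^2)$ clean comparisons. So the algorithm must ``learn'' a good predictor while it sorts, paying only a controlled overhead.

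First I would inspect what the single-predictor algorithm actually needs from $\dirty$: from the bounds in Theorem \ref{thm_Sorting_with_dirty_comparisons} it performs $O(n\log n)$ dirty queries total, and its clean-comparison budget is governed by the per-element error $\eta_i$ of the operator it is handed. The natural construction is a ``voting'' or ``expert-selection'' dirty operator $\dirty^*$: to answer a dirty query on a pair $(a_i,a_j)$, query all $k$ predictors and aggregate. A single majority vote does not obviously control $\sum_i\log(\eta_i^*+2)$ against $\min_p\sum_i\log(\eta_i^p+2)$, so instead I would maintain weights over the $k$ predictors in a multiplicative-weights fashion: whenever a clean comparison is eventually performed on a pair that was earlier answered by $\dirty^*$, we learn the ground truth for that pair and can penalize the predictors that erred. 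The dirty operator $\dirty^*$ at any moment answers according to the current weighted majority. The key step is to argue that the total number of clean comparisons triggered by $\dirty^*$'s mistakes is, up to constants, bounded by $\min_p \sum_{i=1}^n \log(\eta_i^p+2)$ plus a lower-order term coming from the regret of the weight update; here the constraint $k \le 2^{O(n/\log n)}$ enters, since the additive regret of multiplicative weights scales like $\log k$ per element (or per ``round''), and $n \cdot \log k = O(n^2/\log n)$ must be absorbed — more carefully, one wants the $\log k$ overhead to be charged only $O(n)$ times or to be dominated termwise, which is where $k \le 2^{O(n/\log n)}$ makes $\log\log k = O(\log n)$ and keeps things within $O(n\log n)$ overall and within the per-element $O(\log(\eta_i+2))$ bound in aggregate.

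Concretely the steps are: (i) define $\dirty^*$ via weighted-majority over the $k$ predictors, with weights updated each time a clean comparison reveals ground truth on a previously-dirtied pair; (ii) run the algorithm of Theorem \ref{thm_Sorting_with_dirty_comparisons} verbatim with $\dirty^*$ in place of $\dirty$, noting that it is legitimate because $\dirty^*$ is a complete asymmetric relation at every point in time (ties in the weighted vote broken arbitrarily) and the algorithm only ever queries it; (iii) bound the effective error of $\dirty^*$: partition clean comparisons into those the single-predictor algorithm would have made anyway against the best predictor $p^\star$ and those ``extra'' ones caused by $\dirty^*\!\ne\,\dirty^{p^\star}$, and charge the latter to the MW regret; (iv) combine with the guarantee of Theorem \ref{thm_Sorting_with_dirty_comparisons} applied with the benchmark $p^\star$ to get $O\big(\min_p\sum_{i=1}^n\log(\eta_i^p+2)\big)$, absorbing the $k$-dependent overhead using $k\le 2^{O(n/\log n)}$; and (v) check the running time and dirty-query count: each of the $O(n\log n)$ dirty queries now costs $O(k)$ predictor evaluations and weight lookups, which is fine as an $O(nk\log n)$ running-time statement (the theorem only constrains clean comparisons).

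The main obstacle I anticipate is step (iii): making the charging argument per-element rather than global, so that the bound comes out as $O(\min_p\sum_i \log(\eta_i^p+2))$ and not something like $O(\min_p\sum_i\log(\eta_i^p+2) + n\log k)$. The resolution will likely use that the single-predictor algorithm's clean comparisons are already organized element-by-element (each $a_i$ contributes $O(\log(\eta_i+2))$ clean comparisons when inserted/located), so the weight updates triggered while processing $a_i$ can be amortized against the $\log(\eta_i^{p^\star}+2)$ term for that same element, with the $\log k$ appearing only inside a logarithm after the error-to-comparison conversion — this is precisely the regime the hypothesis $k\le 2^{O(n/\log n)}$ is tuned for. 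A secondary subtlety is initialization: before any clean comparison has revealed ground truth, $\dirty^*$ is just an unweighted majority, so one must check that the first $O(1)$ clean comparisons per element suffice to start distinguishing good from bad predictors; this is handled by the standard MW potential argument and does not affect the asymptotics.
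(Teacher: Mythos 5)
Your proposal takes a genuinely different route from the paper, and it has a real gap that you yourself flag. You aggregate the $k$ predictors into a single composite relation $\dirty^*$ by weighted majority on each \emph{pair}, update weights when ground truth is revealed, and then try to charge the extra clean comparisons to the multiplicative-weights regret. The problem is exactly step (iii): the WM guarantee controls the \emph{total} number of pairwise mistakes, $\sum_i \tilde\eta_i^* \le O\bigl(\sum_i \tilde\eta_i^{p^\star} + \log k\bigr)$, but what you need is $\sum_i \log(\tilde\eta_i^* + 2) \le O\bigl(\sum_i \log(\tilde\eta_i^{p^\star} + 2)\bigr)$, and the logarithm is concave. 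A bound on the sum does not control the sum of logs: if $p^\star$'s errors are concentrated on a few elements while $\dirty^*$'s errors happen to spread out (which WM does not forbid), $\sum_i\log(\tilde\eta_i^*+2)$ can be a $\Theta(\log n)$ factor larger than $\sum_i\log(\tilde\eta_i^{p^\star}+2)$. You anticipate an ``amortization'' that makes $\log k$ appear only inside a logarithm, but you give no argument that the per-element error $\tilde\eta_i^*$ is controlled by $\tilde\eta_i^{p^\star}$ plus a small per-element term, and the standard WM potential does not yield such a per-element bound.

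The paper sidesteps this entirely by not trying to build a composite dirty operator at all. Instead it runs a ``prediction with expert advice'' algorithm (Hedge) over the $n$ \emph{element-insertions} of Algorithm~\ref{alg:relational_algo_1}: at the insertion of $a_i$ it commits to a single predictor $p$ for all dirty queries in that round, and after $a_i$ is placed it computes $\tilde\eta_i^p$ for every $p$ (this needs no extra clean comparisons) and feeds Hedge the loss $\ell_i^p = \log(1+\tilde\eta_i^p)/\log n \in [0,1]$. Crucially, the round loss is (up to the $\log n$ normalization) \emph{already} the expected clean-comparison cost of that round, so the $O(L + \log k)$ Hedge regret translates directly to $O\bigl(\min_p\sum_i\log(1+\eta_i^p) + \log k\cdot\log n\bigr)$ clean comparisons, with no concavity conversion to do; the hypothesis $k\le 2^{O(n/\log n)}$ kills the additive $\log k\cdot\log n = O(n)$ term. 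If you want to pursue your per-pair WM variant, the missing ingredient is a per-element regret bound rather than the global one; without it the proof does not close.
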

In other words, the number of clean comparisons is as good as if we knew in advance which of the $k$ predictors is best. The bound on $k$ is almost tight, since already $k\ge 2^{n\log n}$ would mean there could be one predictor for each of the $n!$ possible sorting outcomes, which would render them useless.

The next two theorems capture our algorithms for the positional prediction setting:

\begin{thm}
\label{thm_Sorting_with_positional_prediction_displacement}
Augmented with a positional predictor, there is a deterministic algorithm that sorts an array within $O\left(\sum_{i = 1}^{n} \log\left(\eta^\Delta_i + 2\right)\right)$ running time and comparisons.
\end{thm}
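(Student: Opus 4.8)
The plan is to build the sorted array incrementally by inserting elements one at a time into a balanced binary search tree, but crucially ordering the insertions by predicted position and using the predictions to guide each search so that the insertion cost for $a_i$ is $O(\log(\eta^\Delta_i+2))$ rather than $O(\log n)$. Concretely, I would first sort the items by their predicted position $\hat p(i)$ (this is a sort on integers in $[n]$, so it costs $O(n\log n)$ time but \emph{zero} clean comparisons), and then insert them in this order into a self-balancing BST (e.g. a weight-balanced or red–black tree) augmented with subtree sizes, so that the rank of any node can be read off in $O(\log n)$ time without comparisons. When it is time to insert $a_i$, I already know its current rank among the previously inserted elements should be close to $\hat p(i)$ minus the number of not-yet-inserted elements predicted to come before it — more simply, since we insert in order of $\hat p$, the element $a_i$ ``wants'' to go near the right end of the current structure, at a position predicted by how many elements share essentially the same $\hat p$ value. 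The key point is that the true rank of $a_i$ among the already-inserted elements differs from this predicted insertion point by at most $O(\eta^\Delta_i)$, because any element $a_j$ that is inserted before $a_i$ (so $\hat p(j)\le \hat p(i)$) but satisfies $a_j > a_i$ contributes to a discrepancy, and each such $j$ witnesses $|\hat p(j)-p(j)|+|\hat p(i)-p(i)|$ being large — i.e. the number of ``inversions'' local to $a_i$ is bounded by a constant times $\eta^\Delta_i$.

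The second ingredient is a \emph{finger search} (exponential/galloping search) starting from the predicted insertion position: rather than searching for $a_i$ from the root, I start at the node currently at the predicted rank and probe outward at distances $1,2,4,8,\dots$ until I bracket the correct position, then binary search within the final bracket. If the true displacement of $a_i$ within the current structure is $d_i = O(\eta^\Delta_i)$, this finger search uses $O(\log(d_i+2)) = O(\log(\eta^\Delta_i+2))$ clean comparisons and the same order of running time, provided the BST supports $O(\log d)$ finger search — which weight-balanced trees or level-linked $2$–$4$ trees do. Summing over all $i$ gives the claimed $O(\sum_i \log(\eta^\Delta_i+2))$ bound on both clean comparisons and total running time; the $O(n\log n)$ preprocessing sort and the $O(\log n)$-per-operation tree bookkeeping are dominated since $\sum_i \log(\eta^\Delta_i+2) = \Omega(n)$ always, and when predictions are useless each term is $O(\log n)$ anyway, recovering the classical bound (robustness).

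The main obstacle I anticipate is making the bound $d_i = O(\eta^\Delta_i)$ fully rigorous and the finger-search accounting tight. One has to be careful about what ``predicted insertion position'' means when $\hat p$ is not a permutation (ties, or values never attained), and about the fact that the structure into which $a_i$ is inserted contains only the elements with $\hat p \le \hat p(i)$ (with ties broken consistently), not all of $A$; I would define the predicted rank of $a_i$ among these as simply its position in the insertion order, and then show that the gap between this and its true rank among inserted elements is controlled by $\eta^\Delta_i$. The cleanest way is a counting argument: if $a_i$ is the $t$th element inserted and its true rank among the first $t$ elements is $r$, then $|t - r|$ equals the number of already-inserted $a_j$ with $a_j>a_i$ plus (the number of not-yet-considered... no) — more precisely $|t-r| \le |\{j : \hat p(j) \le \hat p(i),\ p(j) > p(i)\}| + |\{j: \hat p(j) > \hat p(i),\ p(j)\le p(i)\}|$, and each such $j$ forces $\eta^\Delta_j + \eta^\Delta_i \ge 1$ in a way that, after summing, is absorbed; in fact one shows $d_i \le 2\eta^\Delta_i$ directly since each witnessing $j$ has $|\hat p(i)-p(i)| \ge$ (distance contributed) or $|\hat p(j)-p(j)|\ge$ it, but we can always charge to $i$'s own error when $j$ lies on the ``wrong side'' of $\hat p(i)$ relative to where $a_i$ truly belongs. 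I would also double-check that the secondary bound in terms of one-sided errors is \emph{not} needed here (that is Theorem~\ref{thm_Sorting_with_positional_prediction_displacement}'s scope, displacement only), so no extra complication from $\eta^l_i,\eta^r_i$ arises. Finally I would note the lower bound (optimality) is handled separately and is not part of this statement.
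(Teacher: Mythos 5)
There is a genuine gap: the choice of finger position and the claimed bound on the finger-search distance are both wrong. You propose to finger from the ``predicted rank'' of $a_i$, which (since items are inserted in increasing order of $\hat p$) is essentially the rightmost position of the current tree, and you claim that the distance $d_i$ from there to $a_i$'s true insertion position is $O(\eta^\Delta_i)$. That distance is the number of already-inserted $a_j$ with $a_j>a_i$, i.e.\ (roughly) $\eta^l_i$, and $\eta^l_i$ is \emph{not} bounded by $O(\eta^\Delta_i)$. The paper's own footnote gives the counterexample: take $\hat p$ correct everywhere except that the predictions for the $\sqrt n$ smallest and $\sqrt n$ largest items are swapped. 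Then each of the $n-2\sqrt n$ middle items has $\eta^\Delta_i=0$, but when that item is inserted, the $\sqrt n$ largest items (which were inserted first, because their $\hat p$ values are small) already sit to its right in the tree, so $d_i\approx\sqrt n$. Your scheme then pays $\Theta(\log n)$ per middle item, i.e.\ $\Theta(n\log n)$ total, whereas $\sum_i\log(\eta^\Delta_i+2)=O(n)$. The informal charging argument you outline (``each witnessing $j$ forces $\eta^\Delta_j+\eta^\Delta_i\ge 1$ \dots after summing, is absorbed'') cannot be made to work either: a single $j$ with large $\eta^\Delta_j$ can be the witness for $\Theta(n)$ different items $i$, yet it contributes only $\log(\eta^\Delta_j+2)=O(\log n)$ to the budget.

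The fix --- and what the paper actually does --- is to finger not from the predicted global rank but from the \emph{previously inserted element} $a_{i-1}$. After bucket-sorting by $\hat p$, the number of tree nodes between $a_{i-1}$ and $a_i$ is at most $|p(i)-p(i-1)|+1$, and by the triangle inequality (using $\hat p(i-1)\le\hat p(i)$) this is at most $\eta^\Delta_i+\eta^\Delta_{i-1}+(\hat p(i)-\hat p(i-1))+1$. Taking logarithms and summing, the gap terms $\hat p(i)-\hat p(i-1)$ telescope to at most $n$, giving the claimed $O\bigl(\sum_i\log(\eta^\Delta_i+2)\bigr)$ bound. The rest of your machinery (bucket sort on $\hat p$ with zero clean comparisons, a finger tree / BBST supporting $O(\log d)$ finger search, inorder traversal at the end) is correct and matches the paper; it is only the finger location and the $d_i=O(\eta^\Delta_i)$ claim that need to be replaced.
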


\begin{thm}
\label{thm_Sorting_with_positional_prediction_oneside}
Augmented with a positional predictor, there is a deterministic algorithm that sorts an array within $O\left(\sum_{i = 1}^{n} \log\left(\min\left\{\eta^l_i, \eta^r_i\right\} + 2\right)\right)$ comparisons.
\end{thm}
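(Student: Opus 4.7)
The plan is to sort by $\hat p$ and then run an insertion-sort procedure that exploits the one-sided error structure via exponential search from appropriate ends of a balanced BST. First, counting sort by $\hat p$ produces $b_1,\dots,b_n$ with $\hat p(b_1)\le\dots\le\hat p(b_n)$ in $O(n)$ time and no exact comparisons. From the definitions, one gets a structural lemma: in the prefix $\{b_1,\dots,b_t\}$, the value-rank of $b_t$ from the top is at most $\eta^l_{b_t}+1$; in the suffix $\{b_t,\dots,b_n\}$, the value-rank from the bottom is at most $\eta^r_{b_t}+1$.

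This immediately suggests two unidirectional insertion-sort algorithms, each achieving one of the two one-sided bounds. The forward variant processes $b_1,\dots,b_n$ in order and inserts each into a balanced BST by exponential search from the rightmost node; the lemma bounds the distance from the right end by $\eta^l_{b_t}$, giving $O(\log(\eta^l_{b_t}+2))$ per insertion. The backward variant processes in reverse with exponential search from the leftmost node, giving $O(\log(\eta^r_{b_t}+2))$ per item.

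To achieve the per-item minimum, I would run the two variants concurrently on two structures $L_F$ and $L_B$ with interleaved comparisons. Each item is committed to whichever variant first completes its insertion and is then skipped in the other. Because each variant still processes items strictly in (reverse) $\hat p$-order, the structural lemma continues to bound its per-insertion work for whatever items it actually inserts. After both variants finish, $L_F$ and $L_B$ (which together contain each item exactly once) are merged in $O(n)$ comparisons to produce the sorted output, which is absorbed into the target bound since each term $\log(\min+2)\ge 1$.

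The main obstacle is to schedule the concurrent execution so that each item is charged the \emph{minimum} of its two directional costs rather than a prefix/suffix split between the two directions. Achieving this requires a fine-grained interleaving in which, for each item $b_t$, both variants begin their search at appropriately paired times (once both structures have advanced to $b_t$'s $\hat p$-neighborhood) and both halt as soon as either completes, together with an amortization summing the per-item committed costs to the target bound $O(\sum_i \log(\min\{\eta^l_i,\eta^r_i\}+2))$.
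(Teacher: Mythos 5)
Your setup (bucket sort by $\hat p$, the structural lemma bounding the rank of $b_t$ from the right end of the prefix by $\eta^l_{b_t}+1$ and from the left end of the suffix by $\eta^r_{b_t}+1$, the two unidirectional exponential-search variants, and the final linear merge of two structures) matches the paper's ``first approach'' and overall architecture. The gap is in the only step that is actually hard, and you essentially concede it in your last paragraph: your commitment rule ``each item goes to whichever variant first completes its insertion'' does not yield the per-item minimum. Run concurrently, the forward variant sweeps $b_1,b_2,\dots$ and the backward variant sweeps $b_n,b_{n-1},\dots$, so which variant reaches $b_t$ first is governed by the accumulated work on the prefix versus the suffix, not by $\min\{\eta^l_{b_t},\eta^r_{b_t}\}$; the outcome is a prefix/suffix split, which can be a $\Theta(\log n)$-factor off the target (e.g.\ alternate items with small $\eta^l$, huge $\eta^r$ and vice versa). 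The proposed repair --- have both variants start searching for $b_t$ ``at appropriately paired times, once both structures have advanced to $b_t$'s $\hat p$-neighborhood'' --- is not realizable: the forward structure reaches $b_t$'s neighborhood only after processing $b_1,\dots,b_{t-1}$ and the backward one only after $b_{t+1},\dots,b_n$, so this pairing can hold for at most one item at a time; making one variant wait destroys the bound, and letting it proceed forces a commitment without knowing which side is cheaper. This ordering conflict (items destined for $L$ must be inserted in increasing $\hat p$-order, items destined for $R$ in decreasing order, and you cannot know the destination in advance) is exactly the obstacle the paper isolates, and your proposal does not contain a mechanism that resolves it.

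The paper's resolution, which is the missing idea, is a strength-doubling round scheme rather than a race. Insertions happen in $\lceil\log n\rceil$ rounds with budget $\delta=1,2,4,\dots$: in each round a forward pass inserts a still-pending $a_i$ into $L$ only if $a_i$ exceeds the $\delta$th largest element of $L^{<i}=\{a_j\in L: j<i\}$ (one comparison to test; otherwise defer), and a symmetric backward pass does the same for $R$; the binary search is confined to the window between the current and previous boundary values. Two lemmas then carry the bound: (i) every element of $L\setminus L^{<i}$ lies above that window, so the search interval has size at most $\delta$ even though $L$ also contains larger-index items inserted in earlier rounds; and (ii) item $a_i$ is necessarily absorbed by the round with $\delta=2^{\lceil\log(\min\{\eta^l_i,\eta^r_i\}+2)\rceil}$, so the $O(1)$ rejection comparisons per round plus the final $O(\log\delta)$ search give $O(\log(\min\{\eta^l_i,\eta^r_i\}+2))$ per item. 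Without this deferral-by-budget mechanism (or an equivalent one) and the accompanying interval-confinement argument, your sketch does not establish the theorem.
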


We remark that there exist instances where the bound of Theorem~\ref{thm_Sorting_with_positional_prediction_displacement} is stronger than that of Theorem~\ref{thm_Sorting_with_positional_prediction_oneside} and vice versa.\footnote{If predictions are correct except that the positions of the $\sqrt{n}$ smallest and $\sqrt{n}$ largest items are swapped, then $\sum_i\log(\eta_i^\Delta+2)=\Theta(n)$, but $\sum_i\log(\min\{\eta_i^l,\eta_i^r\}+2))=\Theta(n\log n)$. Conversely, if $\hat p(i) = p(i) + \sqrt{n}\mod n$, then $\sum_i\log(\eta_i^\Delta+2)=\Theta(n\log n)$, but $\sum_i\log(\min\{\eta_i^l,\eta_i^r\}+2))=\Theta(n)$.} 

The following lower bounds show tightness of the aforementioned upper bounds.
\begin{thm}
\label{thm:1.4}
Augmented with dirty comparisons, no sorting algorithm uses $o\left(\sum_{i = 1}^{n} \log\left(\eta_i + 2\right)\right)$ clean comparisons.
Augmented with a positional predictor, no sorting algorithm uses $o\left(\sum_{i = 1}^{n} \log\left(\eta^\Delta_i + 2\right)\right)$ or $o\left(\sum_{i = 1}^{n} \log(\min\left\{\eta^l_i, \eta^r_i\right\} + 2)\right)$ comparisons. 
\end{thm}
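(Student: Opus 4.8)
The plan is to prove all three lower bounds by reduction to the classical information-theoretic bound on comparison sorting. The common template is: I would exhibit, for any target sequence of per-element error parameters $(\eta_i)$ (respectively $(\eta_i^\Delta)$ or $(\eta_i^l,\eta_i^r)$), a family $\mathcal{F}$ of input instances that all share the \emph{same} prediction, such that (a) every instance in $\mathcal{F}$ is consistent with the prescribed errors, and (b) $|\mathcal{F}|$ is at least $2^{\Omega(\sum_i \log(\eta_i+2))}$. Since an algorithm that is correct on all of $\mathcal{F}$ must produce a distinct execution trace (sequence of clean-comparison answers) for each element of $\mathcal{F}$, and each clean comparison yields only one bit, any correct algorithm must make at least $\log_2 |\mathcal{F}| = \Omega(\sum_i \log(\eta_i+2))$ clean comparisons in the worst case. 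For the dirty-comparison bound I additionally need that the dirty answers the algorithm receives are fixed across $\mathcal{F}$ (so they convey no information), which I ensure by fixing the tournament $\dirty$ once and for all; for the positional bounds, the predictor $\hat p$ is fixed by construction, so this is automatic.

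The concrete construction I have in mind, for the displacement error: partition $[n]$ into consecutive blocks $B_1, B_2, \ldots$ where block $B_t$ has size roughly $s_t$, and on block $B_t$ let the true order be an arbitrary permutation of that block's $s_t$ positions while the prediction places every element of $B_t$ at (say) the midpoint of the block, or more carefully assigns predicted positions so that within-block rearrangements keep $\eta_i^\Delta = \Theta(s_t - 1)$ for the intended target. The number of instances is $\prod_t (s_t!) = 2^{\Theta(\sum_t s_t \log s_t)} = 2^{\Theta(\sum_i \log(\eta_i^\Delta + 2))}$, using $\log(s_t!) = \Theta(s_t\log s_t)$ and $\eta_i^\Delta \approx s_t$ for $i \in B_t$. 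To realize a \emph{prescribed} (non-uniform) profile $(\eta_i^\Delta)$ rather than just a nice block structure, I would group indices with comparable target error into the same block, choosing $s_t$ on the order of the common error value; a short calculation shows the resulting $\sum_i \log(\eta_i^\Delta+2)$ matches the target up to constants, which suffices for an $o(\cdot)$ lower bound. The same block-permutation family works verbatim for the dirty setting: fix $\dirty$ to agree with one designated "reference" linear order, so that within each block every element of $B_t$ has at most $s_t - 1$ incorrect dirty comparisons, giving $\eta_i \le s_t - 1$, and the dirty answers are identical across all instances in $\mathcal{F}$.

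For the one-sided error bound I would use essentially the same block construction but argue the error bound differently: if all "confusion" between an element's predicted and true rank stays within its block $B_t$, then both $\eta_i^l$ and $\eta_i^r$ are at most $|B_t| - 1$, hence $\min\{\eta_i^l,\eta_i^r\} \le s_t - 1$, and again $\log|\mathcal{F}| = \Omega(\sum_i \log(\min\{\eta_i^l,\eta_i^r\}+2))$. I should double-check the edge case where a target profile forces very small (e.g. zero) errors for most elements: there the claimed bound is $\Omega(n)$ in the regime where, say, $\Theta(n)$ elements have $\eta_i \ge 1$, and even the trivial observation that sorting $n$ elements requires reading each element (hence $\Omega(n)$ comparisons when $n$ of the $\eta_i$ are at least a constant) handles it; I would fold this into the argument by always including at least one block of size $\Theta(1)$ per "active" element or simply noting $\sum_i\log(\eta_i+2) = O(n) + \sum_{i: \eta_i \ge 2}\log\eta_i$ and the first term is dominated.

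The main obstacle I anticipate is purely bookkeeping: making the block sizes $s_t$ realize an \emph{arbitrary} prescribed error profile $(\eta_i)$ up to constant factors, simultaneously in the "number of instances" estimate and in the per-element error estimate, while handling elements straddling block boundaries and the $+2$ additive shift cleanly. None of this is deep — it is the standard "sorting lower bound with a planted easy structure" argument — but stating it so that it literally yields "$o(\sum_i \log(\eta_i+2))$ is impossible for every valid error vector" rather than just "for some error vector" requires care in the quantifier order, and that is where I would spend the most effort writing the formal proof.
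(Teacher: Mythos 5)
Your proposal is correct and takes essentially the same approach as the paper: an information-theoretic counting bound combined with a block-permutation construction sharing a fixed prediction (and fixed tournament in the dirty case), with Stirling's formula giving the $\Omega(\sum_i \log(\eta_i+2))$ estimate. The only difference is that you aim to realize an arbitrary prescribed error profile via variable block sizes, which is where you anticipate bookkeeping trouble, whereas the paper sidesteps this by parametrizing a single family with uniform blocks of size $2^{U/n}$ for each total error budget $U$, which already suffices.
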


\paragraph{Bounds in Terms of Global Error.}
One may wonder how the above bounds translate to a global error measure such as the number of item pairs where the larger one is incorrectly predicted to be no larger than the smaller one. Writing $D$ for this error measure, in the dirty comparison setting we simply have $D=\frac{1}{2}\sum_i\eta_i$, and in the positional prediction setting we have $D\ge\frac{1}{2}\sum_i\eta_i^\Delta$ by \cite[Lemma~11]{NearOptimal}. Thus, concavity of logarithm and Jensen's inequality yield an upper bound of $O\left(n\log\left(\frac{D}{n}+2\right)\right)$ for both settings. This bound is tight\footnote{Indeed, our proof of Theorem~\ref{thm:1.4} constructs a family of instances where each $\eta_i$ is bounded by the same quantity, so Jensen's inequality is tight for these instances.} as a function of $D$ and corresponds to the optimal complexity of adaptive sorting as a function of the number of inversions \citep{local_insertion_sort_85}.

However, our guarantees in terms of element-wise error are strictly stronger whenever Jensen's inequality is not tight, i.e., when the $\eta_i$ are non-uniform. Furthermore, it is reasonable to expect predictors to exhibit varying levels of error for different items, especially when the error originates from element-wise noise. 

\subsection{Related Works}

\paragraph{Algorithms with Predictions.}
Our study aligns with the broader field of learning-augmented algorithms, also known as algorithms with predictions. 
The majority of research has focused on classical online problems such caching \citep{CompetitiveCaching, NearOptimal, wei2020better, bansal2021learningaugmented}, rent-or-buy problems~\citep{ImprovingOnline,GollapudiP19,DJKR20,WangLW20,AntoniadisCEPS21}, scheduling \citep{ImprovingOnline,OnlineScheduling,SchedulingPredictions,AzarLT21,AzarLT22,LindermayrM22} and many others. In comparison, research on learning-augmented algorithms to improve runnning time for offline problems is relatively sparser, but examples include matching \citep{dinitz2021faster,SakaueO22a}, clustering \citep{ErgunFSWZ21}, and graph algorithms \citep{ChenSVZ22,DaviesMVW23}. Motivated by the work of \citep{KraskaBCDP18}, \citep{CompetitiveCaching} describes a simple method to speed up binary search with predictions, which has been inspirational for our work. Learning-augmented algorithms have also been applied to data structures such as binary search trees \citep{LLWoodruff22,CaoCCLRS23}, and empirical works demonstrate the benefits of ML-augmentation for index structures~\citep{KraskaBCDP18} and database systems~\citep{KraskaABCKLMMN19}. There has also been increasing interest in settings where algorithms have access to multiple predictors \citep{GollapudiP19,WangLW20,BhaskaraC0P20,AlmanzaCLPR21,EmekKS21,DinitzILMV22,Anand0KP22,AntoniadisCEPS23}.

Related to sorting, \cite{generalized_sorting_with_pred} studied learning-augmented \emph{generalized} sorting, a variant of sorting where some comparisons are allowed while others are forbidden. The predictions they consider are similar to our dirty comparisons. They proposed two algorithms with comparison complexities $O(n \log n + w)$ and $O(nw)$, where $w$ is the total number of incorrect dirty comparisons. 
In the classical (non-generalized) setting with all comparisons allowed, only the second bound theoretically improves upon $O(n\log n)$, but the dependence on the error is exponentially worse than for our algorithms;  even with a single incorrect dirty comparison per item, the $O(nw)$ bound becomes $O(n^2)$, whereas ours is $O(n)$.
A recent work of \cite{ErlebachSMS23} studies sorting under explorable uncertainty with predictions, where initially only an interval around the value of each item is known, the exact values can be queried, a prediction of these values is given, and the goal is to minimize the number of queries needed to sort the list.

\paragraph{Deep Learning-Based Sorting.}
The thriving development of deep learning has inspired research into new sorting paradigms. A recent study by DeepMind \citep{deepmind_sorting} recast sorting as a single-player game, where they trained agents to play effectively. This leads to the discovery of faster sorting routines for short sequences. \cite{KristoVCMK20} proposed a sorting algorithm that uses a learning component to improve the empirical performance of sorting numerical values. Their algorithm tries to approximate the empirical CDF of the input by applying ML techniques to a small subset of the input. The setting is very different from ours in several ways: If inputs are non-numeric (and no monotonous mapping to numbers is known), then one has to rely on a comparison function, and the approach of \cite{KristoVCMK20} would not be well-defined, whereas our algorithms can sort arbitrary data types. On the other hand, the input in~\citep{KristoVCMK20} is only the list of items without any additional predictions. Note that predictions (or other assumptions) are necessary to beat the entropic $\Omega(n\log n)$ lower bound.\footnote{The theoretical guarantee of their algorithm is $O(n^2)$, although they observe much better empirical performance. Indeed, this makes sense for numerical inputs drawn from a sufficiently nice distribution, since then one can extrapolate from a small part of the input to the rest.}

\paragraph{Noisy Sorting.}
Noisy sorting contemplates scenarios where comparison results may be incorrect. 
This model is useful to simulate potential faults in large systems. Two noisy sorting settings have primarily been considered:
In \textit{independent} noisy setting, each query's result is independently flipped with probability $p \in (0, \frac{1}{2})$. Recently, \cite{gu2023optimal} provided optimal bounds on the number of queries to sort $n$ elements with high probability.
\textit{Recurrent} noisy setting \citep{braverman2007noisy} further assumes any repeated comparisons will yield consistent results. 
\cite{GLLP_b} present an optimal algorithm that guarantees $O(n \log n)$ time, $O(\log n)$ maximum dislocation, and $O(n)$ total dislocation with high probability. While the recurrent noisy setting is closely related to dirty comparisons, studies in that field focus primarily on approximate sorting; to the best of our knowledge, no \textit{exact} sorting algorithms that use both dirty and clean comparisons have been studied.

\paragraph{Adaptive Sorting.}
Adaptive sorting algorithms take advantage of various types of existing order within the input, thus reducing time complexity for partially sorted data. Notable examples of adaptive sorting algorithms include TimSort \citep{TimSort}, which is the standard sorting algorithm in a variety of programming languages, Cook-Kim division \citep{cook1980best}, and Powersort~\citep{MunroW18}, which recently replaced TimSort in Python's standard library. We refer to the survey of \cite{survey_on_adaptive_sorting} for a broader overview. The concept of pre-sortedness is closely related to positional predictions; however, without the motivation from predictors, the complexity bound on adaptive sorting algorithms were often considered under error measures on the entire array, instead of on each element. In contrast, our error measure is element-wise, allowing algorithms with stronger complexity bounds.



\section{Sorting with Dirty Comparisons}

Given a dirty predictor $\dirty$, our goal is to sort $A$ with the least possible number of clean comparisons.
Note that, if $\dirty$ is accurate, $A$ can be sorted using only $O(n)$ clean comparisons and $O(n \log n)$ dirty comparisons.
This could be achieved, for example, by performing Merge Sort with dirty comparisons and then validating the result through clean comparisons between adjacent elements. 
This observation motivates us to consider that not all $O(n^2)$ dirty comparisons are necessary, and we should devise an algorithm minimizing the number of both clean and dirty comparisons.

We propose a randomized algorithm that sorts $A$ with expected $O(\sum\log(2+\eta_i))$ clean comparisons, expected $O(n \log n)$ dirty comparisons, and expected $O(n \log n)$ running time.
The key idea consists of three parts:
1) Sequentially insert each element of $A$ into a binary search tree, following random order;
2) Guide each insertion primarily with dirty comparisons, while verifying the correctness of it using a minimal number of clean comparisons.
3) Correct the mistake induced by the dirty insertion, ensuring that the clean comparisons needed for correction is $O(\log \eta_i)$ in expectation.

We describe the algorithm in Section~\ref{sec:dirtyAlgo} and prove its performance guarantees in Section~\ref{sec:dirtyAnalysis}. In Sections~\ref{sec:prob} and~\ref{sec:multiple}, we introduce two variants of the dirty comparisons setting. The first one assumes that dirty comparisons are probabilistic; the second one discusses the setting where multiple predictors are available. We briefly discuss the extension of our algorithms and results in these new settings.

\subsection{Algorithm}\label{sec:dirtyAlgo}
We now describe the sorting algorithm with dirty comparisons in detail (Algorithm~\ref{alg:relational_algo_1}). We initialize $B$ as an empty binary search tree (BST). For any vertex $v$ of the tree, we denote by $\lft(v)$ and $\rht(v)$ its left and right children, and by $\rt(B)$ the root of $B$. Slightly abusing notation, we write $v$ also for the item stored at vertex $v$. If any of these vertices is missing, the respective variable has value $\nil$.
\begin{algorithm}
    \DontPrintSemicolon
    \caption{Sorting with dirty and clean comparisons}\label{alg:relational_algo_1}
    \KwIn{$A = \langle a_1, \ldots, a_n\rangle$, dirty comparator $\dirty$, clean comparator $<$}
    $B \gets$ empty binary tree\;
    \For{$i\in[n]$ \upshape in uniformly random order\label{ln:DirtyFor}}{
        $(L_1, C_1, R_1)\gets (-\infty, \rt(B),\infty)$\;
        $ t \gets 1$\;
        \While(\Comment{Dirty search}){$C_t\ne\nil$}{
            \lIf{$a_i\dirty C_t$}
            {
                $(L_{t+1}, C_{t+1}, R_{t+1}) \gets (L_t, \lft(C_t),C_t)$
            }
            \lElse
            {
                $(L_{t+1}, C_{t+1}, R_{t+1}) \gets (C_t, \rht(C_t),R_t)$
            }
            $t \gets t+1$
        }
        $C\gets C_{t^*}$, where $t^*\le t$ is maximal s.t. $L_{t^*} < a_i < R_{t^*}$\label{ln:backtrack}
        \ \ \ \ \ \ \ \ \ \ \ \ \ \ \ \ \ \ \ \ \ \ \ \ \ \ \ \ \ \ \ \ \Comment{Verification}
        \;
        \While(\Comment{Clean search}){$C\ne\nil$}{
            \lIf{$a_i< C$}
            {
                $C \gets \lft(C)$
            }
            \lElse
            {
                $C \gets \rht(C)$
            }
        }
        Insert $a_i$ at $C$
    }
    \Return inorder traversal of $B$
\end{algorithm}

Within each iteration of the for-loop starting in line~\ref{ln:DirtyFor}, we select one item $a_i$ of $A$ uniformly randomly from the items that have not been processed yet.  
Then, we insert this item $a_i$ into $B$, while maintaining the invariant that $B$ remains a BST with respect to clean comparisons $<$.

Inserting item $a_i$ into $B$ requires three phases, as illustrated in Figure~\ref{fig:DirtyClean}. The first phase involves performing a search for the insertion position using dirty comparisons $\dirty$, keeping track of the search path. Here, we denote by $C_t$ the $t$th vertex on this path, and by $L_t$ and $R_t$ the lower and upper bounds on items that can be inserted in the subtree rooted at $C_t$ without violating the BST property with respect to $<$. Correctness of the choice of $L_t$ and $R_t$ follows from the fact that $B$ was a BST with respect to $<$ before the current insertion. This dirty procedure stops when the search path reaches a $\nil$-leaf, regarded as the predicted position for $a_i$'s insertion. However, since we used dirty comparisons to trace the path, $a_i$ might violate one of the boundary conditions $L_t<a_i$ or $a_i<R_t$ at some recursion step $t$. We call a recursion step $t$ \emph{valid} for $a_i$ if $L_t<a_i<R_t$.
\begin{figure}
    \centering
    \includegraphics[width=0.7\textwidth]{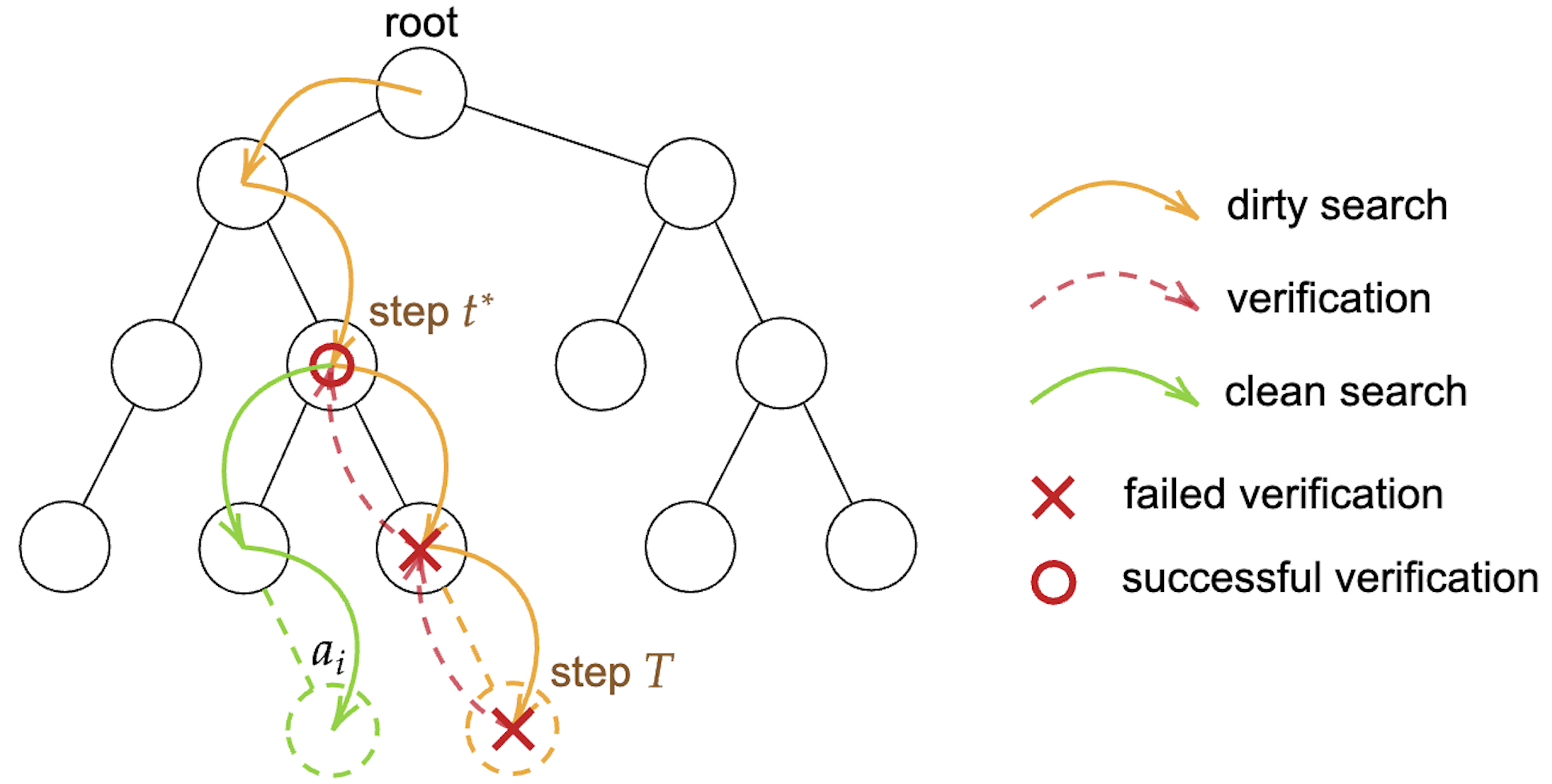}
    \caption{The insertion process in dirty comparison sorting.}
    \label{fig:DirtyClean}
\end{figure}
Then, we enter the verification phase in line~\ref{ln:backtrack}. We traverse the dirty search path \emph{in reverse order} to locate the last valid step $t^*$. A naive method to do this (which is sufficient for our asymptotic guarantees) is to repeatedly decrease $t$ by $1$ until $t$ is valid; we discuss an alternate, more efficient method in Remark~\ref{rmk:constantFactor}, which yields a better constant factor.

The final phase involves performing a clean search starting from $C_{t^*}$, to determine the correct insertion position for $a_i$. After inserting $a_i$ into that position, $B$ remains a BST with respect to $<$. Once all items of $A$ are inserted into $B$, we can obtain the sorted order through the inorder traversal of $B$.

\subsection{Complexity Analysis}\label{sec:dirtyAnalysis}

The goal of this section is to prove Theorem~\ref{thm_Sorting_with_dirty_comparisons}. We start with several lemmas on the expected behavior of dirty search and clean search, focusing on a single iteration of the for-loop corresponding to item $a_i$. Roughly, the idea is to show that the dirty search path has length $O(\log n)$, whereas the verification path and the clean search path only have depth $O(\log\eta_i)$.

For an iteration of the dirty and clean search while-loops, respectively, we call the vertex stored as $C_t$ resp. $C$ at the start of the iteration the \emph{pivot}; the subtree rooted at the pivot is referred to as the \emph{active subtree}. 
The \emph{size} of a subtree is the number of non-$\nil$ vertices it contains.
For the dirty search, let $s_t$ denote the size of the active subtree at iteration $t$, and $T$ denote the number of recursion steps needed. 
Denote by $\mkd{s}$ and $\mkc{s}$ the number of iterations of the dirty and clean search where the size of the active subtree lies in $(\frac{s}{2}, s]$. In particular, $\mkd{s} = |\{t\colon s_t\in(\frac{s}{2},s]\}|$. 

\begin{lem} \label{lemma_exp_step_each_depth}
$\E[\mkd{s}]=O(1)$ and $\E[\mkc{s}]=O(1)$ for all $s$. Moreover, $\E[\mkd{s} \mid s_{t^*} = s']=O(1)$ and $\E[\mkc{s} \mid s_{t^*} = s']=O(1)$  for all $s < s'$ with $\P(s_{t^*}=s')>0$.
\end{lem}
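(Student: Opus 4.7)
The approach relies on a single structural fact: because items are processed in uniformly random order (line~\ref{ln:DirtyFor}), the tree $B$ at the start of every iteration of the for-loop is a \emph{random BST} on the items inserted so far. By the hereditary property of random BSTs, every subtree of $B$ is itself a random BST on its items, with root uniformly distributed and with independent left and right subtrees on the smaller and larger items. This is the only property of $B$ the argument uses.

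The first ingredient is a one-step geometric contraction in expectation. Let the current active subtree have size $m\ge 1$, and let $R$ be the rank of its root, uniform in $\{1,\dots,m\}$. The next active subtree has size $R-1$ or $m-R$, depending on which child is selected; regardless of the rule (clean comparison, dirty comparison, or an adversarial choice), the new size $S'$ is therefore bounded by $\max(R-1,m-R)$. A direct calculation using the uniform distribution of $R$ yields $\E[\max(R-1,m-R)]\le 3m/4$, and hence $\E[S_{t+1}\mid S_t]\le (3/4)\,S_t$ for both the dirty and clean phases. Let $\sigma = \min\{t: S_t\le s\}$; for $t<\sigma$ we have $S_t>s$, so those steps do not contribute to $\mkc{s}$ or $\mkd{s}$. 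Iterating the contraction gives $\E[S_{\sigma+j}\mid S_\sigma]\le S_\sigma\,(3/4)^j\le s\,(3/4)^j$, and Markov's inequality gives $\P[S_{\sigma+j}>s/2\mid S_\sigma]\le 2(3/4)^j$; summing over $j\ge 0$ produces $\E[\mkc{s}]\le 8$ and $\E[\mkd{s}]\le 8$, which is the unconditional statement.

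For the conditional statements with $s<s'$, note that any step $t$ with $S_t\in(s/2,s]$ must satisfy $t>t^*$, because $S_{t^*}=s'>s$ and the sizes decrease strictly along the search. Every such step therefore lies inside a proper subsubtree of $T_{t^*}$. Conditional on the items that sit in each such subsubtree, the subsubtree is still a random BST on those items: the event $\{s_{t^*}=s'\}$ is determined only by the item partition above $C_{t^*}$ and by the identity of $C_{t^*}$ itself, not by the insertion order within the items below $C_{t^*}$, which remains uniformly random and hence keeps the random-BST distribution intact. Therefore the contraction $\E[S_{t+1}\mid S_t]\le(3/4)\,S_t$ continues to hold for $t\ge t^*+1$, and repeating the Markov argument yields $\E[\mkc{s}\mid s_{t^*}=s']=O(1)$ and $\E[\mkd{s}\mid s_{t^*}=s']=O(1)$. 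The main point I would verify carefully is exactly this preservation of the random-BST structure under the conditioning, since it is what makes the geometric contraction still applicable after step $t^*$ and yields the uniform $O(1)$ bound independently of $s$ and $s'$.
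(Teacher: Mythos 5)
Your proof is correct and takes essentially the same route as the paper: both rely on the random-BST property (the pivot is uniform over the active subtree), a geometric size contraction, and the observation that conditioning on $s_{t^*}=s'$ is invariant under permuting the insertion order within any subtree strictly below $C_{t^*}$, so the random-BST distribution of those subtrees is preserved. The only cosmetic difference is that you package the contraction as the expectation bound $\E[\max(R-1,m-R)]\le \tfrac{3}{4}m$ followed by Markov and geometric summation, whereas the paper phrases it as a percentile argument (probability $\ge\tfrac12$ that both children have size $\le\tfrac34|V|$, hence $O(1)$ expected steps to halve); these are interchangeable formulations of the same idea.
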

\begin{proof} 
We employ a percentile argument. Consider the first step of either while-loop where the active subtree has at most $s$ vertices, and let $V$ be the set of these vertices. Note that the pivot in this step is the first element in $V$ inserted into the tree. Conditioned on the vertices of the active subtree being $V$, elements of $V$ are equally likely to be the pivot, since their insertion order is uniformly random. This is true even when conditioned on $s_{t^*}=s'$, since reordering the elements within the set $V$ does not change the value of $s_{t^*}$, which is determined at higher vertices of the tree.
Thus, we have at least a $50\%$ chance that the pivot lies between the 25th percentile and the 75th percentile, in which case both children subtrees contain at most $\frac{3}{4}|V|$ vertices each.  Hence, the size of active subtree shrinks by a factor of $\frac{3}{4}$ after at most $2$ steps in expectation, and it shrinks to size smaller than $\frac{s}{2}$ after at most $O(1)$ steps in expectation.
\end{proof}
Based on this lemma, we are able to characterize the expected length of dirty search, clean search, and dirty search after time $t^*$. 
\begin{lem} \label{lemma_exp_step_dirty}
A dirty search takes $O(\log n)$ steps in expectation, i.e. $\E[T] = O(\log n)$.
\end{lem}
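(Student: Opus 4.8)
The plan is to decompose the dirty search path according to the size of the active subtree at each step, and then sum up the contributions using Lemma~\ref{lemma_exp_step_each_depth}. Since the dirty search starts at the root, the initial active subtree has size $s_1 \le n$ (in fact equal to the current size of $B$, which is at most $n$). As the search proceeds, the active subtree is always a subtree of the previous one, so its size is non-increasing along the path. The search terminates when it reaches a $\nil$-leaf, i.e., when the active subtree has size $0$.

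First I would partition the possible subtree sizes into the dyadic scales $(\frac{s}{2}, s]$ for $s = 1, 2, 4, \dots, 2^{\lceil \log_2 n\rceil}$, so there are $O(\log n)$ scales in total. Every step $t$ of the dirty search falls into exactly one such scale, namely the one containing $s_t$. Hence
\begin{align*}
T = \sum_{j} \mkd{2^j},
\end{align*}
where the sum ranges over the $O(\log n)$ relevant values of $j$. Taking expectations and applying linearity together with the bound $\E[\mkd{s}] = O(1)$ from Lemma~\ref{lemma_exp_step_each_depth}, we get $\E[T] = \sum_j \E[\mkd{2^j}] = O(\log n) \cdot O(1) = O(\log n)$, as claimed.

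The only subtlety — and the step I would be most careful about — is making sure Lemma~\ref{lemma_exp_step_each_depth} applies cleanly at every scale simultaneously. The lemma bounds $\E[\mkd{s}]$ for each fixed $s$; since we only need it for the deterministic set of $O(\log n)$ dyadic values of $s$, no union bound or uniformity over a random collection of scales is required, and linearity of expectation suffices regardless of dependencies between the $\mkd{2^j}$. I should also note that scales larger than the current tree size contribute zero steps, so the sum really is over $O(\log n)$ terms. This is essentially all that is needed; the argument is short once Lemma~\ref{lemma_exp_step_each_depth} is in hand.
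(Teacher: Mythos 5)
Your proposal is correct and matches the paper's proof: both decompose $T$ into the dyadic-scale counts $\mkd{s}$, apply Lemma~\ref{lemma_exp_step_each_depth} at each of the $O(\log n)$ scales, and conclude by linearity of expectation. The only cosmetic difference is the direction in which you enumerate the dyadic values of $s$.
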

\begin{proof}
In each dirty search, the initial largest subtree has size at most $n$. We can apply Lemma~\ref{lemma_exp_step_each_depth} repeated on $s = n, \lfloor \frac{n}{2} \rfloor, \ldots, 1$. By linearity of expectation, we derive that a dirty search takes $\lceil \log n \rceil \cdot O(1) = O(\log n)$ steps in expectation.
\end{proof}

\begin{lem} \label{lemma_exp_step_clean}
A clean search takes $O(\E[\log(s_{t^*} + 1)])$ steps in expectation; in dirty search after reaching $t^*$, there are $O(\E[\log(s_{t^*} + 1)])$ steps in expectation, i.e. $\E[T - t^*] = O(\E[\log(s_{t^*} + 1)])$.
\end{lem}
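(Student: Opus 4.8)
The plan is to condition on the value of $s_{t^*}$ and, for each fixed value $s'$, to bound both the number of clean-search steps and the number of dirty-search steps after $t^*$ by decomposing them into geometrically shrinking size bands, exactly as in the proof of Lemma~\ref{lemma_exp_step_dirty}, except that now I would invoke the \emph{conditional} estimates of Lemma~\ref{lemma_exp_step_each_depth}. The structural fact that makes this work is that along any root-to-leaf descent in the BST the size of the active subtree is \emph{strictly} decreasing, since moving from a vertex to one of its children discards at least that vertex itself. Consequently, once the search is at $C_{t^*}$, every active subtree encountered afterwards --- whether on the clean-search path or on the dirty-search path past step $t^*$ --- has size at most $s_{t^*}$, and strictly less than $s_{t^*}$ except possibly at the very first clean-search iteration.

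\emph{Clean search.} I would fix $s'$ with $\P(s_{t^*}=s')>0$; the case $s'=0$ (i.e.\ $C_{t^*}=\nil$) is trivial since the clean search then performs no step and $\log(s'+1)=0$, so assume $s'\ge 1$. Let $N$ be the number of clean-search iterations. The first iteration has active subtree of size exactly $s'$, and by strict monotonicity every later iteration has active subtree of size in $\{1,\dots,s'-1\}$. Cover $\{1,\dots,s'-1\}$ by the $O(\log(s'+1))$ bands $(s/2,\,s]$ with $s=s'-1,\lfloor (s'-1)/2\rfloor,\dots,1$; each such $s$ is strictly below $s'$, so Lemma~\ref{lemma_exp_step_each_depth} gives $\E[\mkc{s}\mid s_{t^*}=s']=O(1)$ for all of them. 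Since the number of clean-search iterations in band $s$ is at most $\mkc{s}$, we have $N\le 1+\sum_{s}\mkc{s}$, hence $\E[N\mid s_{t^*}=s']=O(\log(s'+1))$. Averaging over $s_{t^*}$ via the law of total expectation then yields $\E[N]=O(\E[\log(s_{t^*}+1)])$, which is the first claim.

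\emph{Dirty search past $t^*$.} By strict monotonicity of $(s_t)_t$, every dirty-search step with index $t>t^*$ has, conditioned on $s_{t^*}=s'$, active subtree of size in $\{1,\dots,s'-1\}$; using the same $O(\log(s'+1))$ bands (all of index $<s'$) and bounding the number of such steps in band $s$ by $\mkd{s}$, Lemma~\ref{lemma_exp_step_each_depth} gives $\E[T-t^*\mid s_{t^*}=s']\le\sum_s\E[\mkd{s}\mid s_{t^*}=s']=O(\log(s'+1))$ (reading $T-t^*$ as $0$ on the event $t^*=T+1$, where the dirty search already ended at a correct $\nil$-leaf), and averaging over $s_{t^*}$ gives $\E[T-t^*]=O(\E[\log(s_{t^*}+1)])$. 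I expect the only delicate point --- the main thing to get right --- to be that the conditional bound of Lemma~\ref{lemma_exp_step_each_depth} is stated only for band indices $s<s_{t^*}$: this is precisely why the single top step of the clean search must be peeled off separately, and why strict (rather than merely weak) monotonicity of the subtree sizes is essential here. Everything else is a routine summation of $O(\log(s'+1))$ terms that are each $O(1)$, followed by a single application of the law of total expectation.
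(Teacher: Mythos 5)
Your proof is correct and follows essentially the same approach as the paper's: condition on $s_{t^*}=s'$, decompose the remaining search into geometrically shrinking size bands $s=s'-1,\lfloor(s'-1)/2\rfloor,\dots,1$, apply the conditional part of Lemma~\ref{lemma_exp_step_each_depth} to each band, sum to get $O(\log(s'+1))$, and average via the law of total expectation. You are slightly more explicit than the paper in peeling off the first clean-search step (at size exactly $s'$, outside the range where the conditional bound applies) and in handling the degenerate case $s'=0$, but these are minor bookkeeping refinements to the same argument.
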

\begin{proof}
Assume $s_{t^*} = s'$. Then, in dirty search after $t^*$, the initial largest subtree has size $s'$, and the rest of the active subtrees all have size at most $s' - 1$. By applying the second part of Lemma~\ref{lemma_exp_step_each_depth} on $s = s' - 1, \lfloor \frac{s' - 1}{2} \rfloor, \ldots, 1$ and summing, we obtain
    $\E\left[T - t^* | s_{t^*} = s'\right] 
    = O(\log(s' + 1))$.
Thus,
\begin{align*}
    \E\left[T - t^*\right]
    = \sum_{s'}\E\left[T - t^* | s_{t^*} = s'\right]\cdot \P\left[s_{t^*} = s'\right]
    = O\left(\E\left[\log\left(s_{t^*} + 1\right)\right]\right)
\end{align*}
The bound on clean search follows in the same way.
\end{proof}

In order to relate $\E\left[\log\left(s_{t^*} + 1\right)\right]$ to the prediction error $\eta_i$, the next lemma first characterizes the probability of a given time step $t$ being $t^*$ as a function of the subtree size $s_t$.

\begin{lem}\label{lem:probt=t*}
For any $t$ and $k$, $\P\left[t=t^*\bigm{|} s_t\in\left(2^{k-1},2^{k}\right]\right] \leq \eta_i / 2^{k-1}$.
\end{lem}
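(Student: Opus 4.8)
The plan is to connect the event $\{t = t^*\}$ to the existence of incorrectly-placed ancestors of the active subtree, and then bound the number of such ancestors by $\eta_i$. Recall that $t^*$ is the \emph{last} valid step, meaning that all of $C_{t^*+1}, C_{t^*+2}, \dots, C_T$ are invalid for $a_i$. In particular, if $t = t^*$ and $s_t \in (2^{k-1}, 2^k]$, then the very next step $C_{t+1}$ is invalid, i.e.\ either $L_{t+1} \not< a_i$ or $a_i \not< R_{t+1}$. But the dirty search went from $C_t$ to $C_{t+1}$ by comparing $a_i \dirty C_t$: if $a_i \dirty C_t$ we set $R_{t+1} = C_t$ and $L_{t+1} = L_t$, and if $C_t \dirty a_i$ we set $L_{t+1} = C_t$ and $R_{t+1} = R_t$. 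Since step $t$ itself was valid, $L_t < a_i < R_t$, so the newly introduced bound must be the one that fails: in the first case $a_i \not< C_t$, i.e.\ $C_t < a_i$, contradicting $a_i \dirty C_t$ being the correct comparison; in the second case $a_i < C_t$, contradicting $C_t \dirty a_i$. Either way, the dirty comparison between $a_i$ and $C_t$ was \emph{incorrect}.

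So the first key step is: $\{t = t^*\}$ implies the dirty comparison between $a_i$ and $C_t$ is one of the (at most $\eta_i$) incorrect dirty comparisons involving $a_i$. Next I would fix the conditioning $s_t \in (2^{k-1}, 2^k]$ and argue via a percentile/randomness argument analogous to Lemma~\ref{lemma_exp_step_each_depth}. Condition further on the identity of the vertex set $V$ of the active subtree at the first moment its size drops to at most $2^k$; then $|V| > 2^{k-1}$ on the event in question, and the pivot $C_t$ at that step is a uniformly random element of $V$ (its insertion order among $V$ is uniform, and this remains true under the conditioning since reordering within $V$ does not affect which set becomes the size-$\le 2^k$ active subtree). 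Actually, more care is needed: $t$ ranges over potentially several steps while the active-subtree size stays in $(2^{k-1}, 2^k]$, not just the first. But the point is that across all such steps $t$, the pivots $C_t$ are distinct vertices, all lying in $V$, and $\{t = t^*\}$ for one of them requires the dirty comparison $a_i \dirty C_t$ to be wrong. Since at most $\eta_i$ of the elements $a_j$ satisfy $(a_i < a_j) \ne (a_i \dirty a_j)$, at most $\eta_i$ elements of $V$ can serve as such a "bad pivot."

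Now I would finish by a union bound over the (at most, say, $2$-in-expectation, but here we just need a crude bound) steps $t$ with $s_t \in (2^{k-1}, 2^k]$. Conditioned on $V$, for each such step the pivot is uniform in $V$, so the probability that this particular pivot is one of the $\le \eta_i$ bad elements is $\le \eta_i / |V| < \eta_i / 2^{k-1}$. The cleanest way to phrase this without worrying about how many steps occur in the window: the event $\{t^* \in \text{window } k\}$ (i.e.\ $s_{t^*} \in (2^{k-1}, 2^k]$) requires that some pivot in window $k$ is bad; but I actually want the per-$t$ statement $\P[t = t^* \mid s_t \in (2^{k-1}, 2^k]]$. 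For a fixed $t$, on the event $s_t \in (2^{k-1}, 2^k]$, the vertex $C_t$ is determined by the path taken, and conditioned on the active subtree vertex set $V$ at that moment and the choices above $C_t$, the pivot $C_t$ is the first-inserted element of $V$, hence uniform in $V$; so $\P[C_t \text{ is bad} \mid s_t \in (2^{k-1},2^k]] \le \eta_i / |V| < \eta_i/2^{k-1}$, and $\{t = t^*\} \subseteq \{C_t \text{ is bad}\}$ gives the claim.

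The main obstacle I anticipate is making the conditioning argument fully rigorous: carefully setting up the filtration so that "$C_t$ is uniform in $V$" is a valid conditional statement when we have \emph{also} conditioned on $s_t \in (2^{k-1}, 2^k]$ (which is not simply an event about $V$, since it also involves the path taken to reach $C_t$ and hence the random insertion order \emph{outside} $V$), and ensuring the event $\{t = t^*\} \subseteq \{a_i \dirty C_t \text{ incorrect}\}$ is verified in both branches of the dirty-search step. Once the uniformity is pinned down, the counting bound $|\{\text{bad elements}\}| \le \eta_i$ and the inequality $|V| > 2^{k-1}$ immediately deliver $\eta_i / 2^{k-1}$.
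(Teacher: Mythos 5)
Your proposal is correct and takes essentially the same approach as the paper: observe that $t=t^*$ forces the dirty comparison with the pivot $C_t$ to be incorrect (you spell out the case analysis that the paper leaves implicit), then use the fact that conditioned on the active subtree at step $t$ having a given vertex set $V$, the pivot is the first-inserted element of $V$ and hence uniform over $V$, of whose $|V| > 2^{k-1}$ elements at most $\eta_i$ can have a wrong dirty comparison with $a_i$. The "main obstacle" you flag — rigorously justifying uniformity of the pivot under the additional conditioning — is also treated at the same informal level in the paper (via the observation that reordering insertions within $V$ does not affect the events determined higher up in the tree), so there is no substantive gap relative to the published argument.
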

\begin{proof}
Recall that $t^*$ is the last valid time step. A shift from a valid time step to an invalid time step occurs only if the dirty comparison between the pivot and $a_i$ is wrong. Among all $s_t$ potential pivots, at most $\eta_i$ can have mistaken dirty comparisons with $a_i$, and they are equally likely to be the pivot (depending on which of them was inserted first). Hence, given $s_t\in\left(2^{k-1},2^{k}\right]$, the probability of a pivot with mistaken comparison is at most $\eta_i / s_t \le \eta_i / 2^{k-1}$. 
\end{proof}

Based on Lemma~\ref{lem:probt=t*}, we present the central claim bridging prediction error with comparison complexity.

\begin{lem} \label{lemma_exp_size_smallest_target}
$\E\left[\log\left(s_{t^*}\right)\right] = O\left(\log\left(\eta_i + 1\right)\right)$.
\end{lem}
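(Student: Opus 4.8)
The plan is to combine Lemma~\ref{lem:probt=t*} with Lemma~\ref{lemma_exp_step_each_depth} to control the distribution of the subtree size $s_{t^*}$ at the last valid step, and then integrate $\log(\cdot)$ against the resulting tail bound. First I would dyadically decompose the possible values of $s_{t^*}$: write $\E[\log(s_{t^*})] = \sum_{k\ge 0} \E[\log(s_{t^*})\cdot\mathbbm{1}\{s_{t^*}\in(2^{k-1},2^k]\}]$, which is at most $\sum_{k\ge 0} (k+1)\cdot\P[s_{t^*}\in(2^{k-1},2^k]]$ up to constants. So it suffices to bound $\P[s_{t^*}\in(2^{k-1},2^k]]$ appropriately for each dyadic scale $k$.

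To bound $\P[s_{t^*}\in(2^{k-1},2^k]]$, I would condition on the set of time steps $t$ with $s_t\in(2^{k-1},2^k]$ — by Lemma~\ref{lemma_exp_step_each_depth} there are $O(1)$ such steps in expectation, call this count $\mkd{2^k}$ — and note that $s_{t^*}\in(2^{k-1},2^k]$ forces $t^*$ to be one of these steps. By Lemma~\ref{lem:probt=t*}, each such step is $t^*$ with conditional probability at most $\eta_i/2^{k-1}$, so a union bound over the (random number of) candidate steps, together with $\E[\mkd{2^k}]=O(1)$, gives $\P[s_{t^*}\in(2^{k-1},2^k]] = O(\eta_i/2^{k-1})$. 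Since probabilities are also trivially bounded by $1$, we get $\P[s_{t^*}\in(2^{k-1},2^k]] = O(\min\{1,\ \eta_i/2^{k-1}\})$.

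Plugging this into the dyadic sum, $\E[\log(s_{t^*})] = O\!\left(\sum_{k\ge 0}(k+1)\min\{1,\eta_i/2^{k-1}\}\right)$. The terms with $2^{k-1}\le\eta_i$, i.e. $k = O(\log(\eta_i+1))$, contribute $O(\log^2(\eta_i+1))$ if bounded crudely by $k+1$ — which is too weak — so the key point is that for $k$ slightly above $\log\eta_i$ the factor $\eta_i/2^{k-1}$ decays geometrically and kills the linear factor $k+1$: $\sum_{k\ge\log\eta_i}(k+1)\eta_i/2^{k-1}$ is a geometric-type sum dominated by its first term, contributing $O(\log(\eta_i+1))$. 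Combined with the $O(\log(\eta_i+1))$ contribution from the head $k\le\log\eta_i$ where we use the bound $1$, this yields $\E[\log(s_{t^*})] = O(\log(\eta_i+1))$ as claimed.

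The main obstacle I anticipate is the union-bound step over a random number of candidate time steps: one has to be careful that the bound $\eta_i/2^{k-1}$ from Lemma~\ref{lem:probt=t*} is a per-step conditional probability and that summing it over the $\mkd{2^k}$ steps at scale $k$ is legitimate. The cleanest route is to write $\P[s_{t^*}\in(2^{k-1},2^k]] \le \E\big[\sum_{t:\,s_t\in(2^{k-1},2^k]} \P[t=t^*\mid s_t]\big] \le (\eta_i/2^{k-1})\,\E[\mkd{2^k}]$, using Lemma~\ref{lem:probt=t*} inside the expectation and then Lemma~\ref{lemma_exp_step_each_depth}; this avoids any independence concerns. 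A minor secondary point is the edge case $\eta_i = 0$ (then $s_{t^*}=n$ only at the root scale or, more precisely, $t^*=T$ always, but the statement's right-hand side $\log(\eta_i+1)=0$ still needs the convention $\log(s_{t^*})$ is interpreted via the $+1$ shift as in the downstream Lemma~\ref{lemma_exp_step_clean}), which is handled by noting that when $\eta_i=0$ no invalid step ever occurs so the analysis degenerates gracefully.
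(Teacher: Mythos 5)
Your approach is structurally the same as the paper's: the same use of Lemma~\ref{lem:probt=t*} together with Lemma~\ref{lemma_exp_step_each_depth} to derive $\P\left[s_{t^*}\in(2^{k-1},2^k]\right]=O(\eta_i/2^{k})$, and the same dyadic summation at the end. Your ``cleanest route'' identity $\P\left[s_{t^*}\in(2^{k-1},2^k]\right]\le (\eta_i/2^{k-1})\,\E\left[\mkd{2^k}\right]$ is exactly the paper's $\sum_t\P[s_t\in(2^{k-1},2^k]]\cdot\P[t=t^*\mid s_t\in(2^{k-1},2^k]]$, written via the tower property instead of by factoring conditional probabilities --- equivalent and equally valid.

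However, the final summation step has a genuine gap. After committing to the decomposition $\E[\log(s_{t^*})]\le\sum_{k\ge0}(k+1)\min\{1,\eta_i/2^{k-1}\}$, you observe (correctly) that the head terms with $2^{k-1}\le\eta_i$, bounded by $(k+1)\cdot 1$, sum to $\Theta(\log^2(\eta_i+1))$, ``which is too weak.'' But the fix you then describe only addresses the tail, and your closing sentence asserts the head contributes $O(\log(\eta_i+1))$ ``where we use the bound $1$'' --- that is the same crude bound you just rejected, and it really does give $\log^2$, not $\log$. The resolution, which is what the paper does, is to \emph{not} dyadically decompose the head: bound the head contribution as $\E\left[\log(s_{t^*})\,\mathbbm{1}\{s_{t^*}\le\eta_i+1\}\right]\le\log(\eta_i+1)$ directly, since $\log(s_{t^*})\le\log(\eta_i+1)$ pointwise on that event and its probability is at most $1$. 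Only the tail $k\ge\lceil\log(\eta_i+1)\rceil$ needs the per-scale decay bound $k\cdot\eta_i\cdot O(2^{-k})$, and your geometric estimate for that part is correct. With this single repair the argument goes through and coincides with the paper's.
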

\begin{proof}
We have
\begin{align*}
\P\left[s_{t^*} \in (2^{k - 1}, 2^{k}]\right] &= \sum_{t}\P\left[t=t^*\text{ and }s_t\in(2^{k - 1}, 2^{k}]\right]\\
&= \sum_{t}\P\left[s_t\in(2^{k - 1}, 2^{k}]\right]\cdot \P\left[t=t^*\bigm{|}s_t\in(2^{k - 1}, 2^{k}]\right]\\
&\le \E\left[\mkd{2^k}\right]\cdot\eta_i / 2^{k-1}\\
&\le \eta_i\cdot O(2^{-k}),
\end{align*}
where the first inequality uses Lemma~\ref{lem:probt=t*} and the second is due to Lemma~\ref{lemma_exp_step_each_depth}. Thus,

\begin{align*}
\E[\log(s_{t^*})] 
& \le  \log(\eta_i+1) \,\,+ \sum_{k = \lceil\log (\eta_i+1)\rceil}^\infty \P\left[s_{t^*} \in (2^{k - 1}, 2^{k}]\right] \cdot k \\
& \leq \log(\eta_i+1)  \,\,+\,\, \eta_i\sum_{k = \lceil\log (\eta_i+1)\rceil}^{\infty} O\left(k2^{-k}\right)\\
& = O\left(\log \left(\eta_i + 1\right)\right).\tag*{\qedhere}
\end{align*}
\end{proof}

Theorem~\ref{thm_Sorting_with_dirty_comparisons} is subsequently deduced.

\begin{proof}[Proof of Theorem~\ref{thm_Sorting_with_dirty_comparisons}]
    Dirty comparisons are only conducted during dirty search, when the recursion step is incremented by one. As per Lemma~\ref{lemma_exp_step_dirty}, the total number of dirty comparisons is bounded by the sum of steps across all insertions, which is $n \cdot O(\log n)$.

In each verification phase, as we traverse the dirty search in reverse order to locate $t^*$, at most $T - t^*+2$ clean comparisons suffice. In each clean search phase, the expected number of clean comparisons is the expected number of steps in clean search. Therefore, based on Lemma~\ref{lemma_exp_step_clean} and Lemma~\ref{lemma_exp_size_smallest_target}, the total number of clean comparisons performed in both phases is $O\left(\sum \log \left(\eta_i + 2\right)\right)$.
    
Additionally, the running time is dominated by the number of dirty and clean comparisons.
\end{proof}

\begin{rmk}\label{rmk:constantFactor}
The algorithm can be implemented such that the number of clean comparisons is at most that of quicksort plus $O(n \log \log n)$, regardless of prediction error. Thus, even with terrible predictions our algorithm matches the performance of quicksort up to a factor that tends to $1$ as $n\to\infty$.

To achieve this, we can implement the verification step by decreasing $t$ in geometrically increasing step sizes until a valid $t$ has been found, and then perform a binary search for $t^*$ between the last two attempted values of $t$. This reduces the number of clean comparisons in line~\ref{ln:backtrack} from $O(T-t^*)$ to $O(\log(T-t^*))$, which is at most $O(\log\log n)$ in expectation by Lemma~\ref{lemma_exp_step_dirty}, and at most $O(n\log\log n)$ for all verification steps together. The remaining clean comparisons are performed during the clean searches. In the worst case (when all dirty comparisons are incorrect) all clean searches start from the root, and together they perform exactly the same set of comparisons as quicksort (by a coupling argument between the random choices of the two algorithms: E.g., the root of the search tree corresponds to the initial uniformly random pivot of quicksort). 
\end{rmk}

\subsection{Probabilistic Dirty Comparisons} \label{sec:prob}

Our algorithm extends to the case where dirty comparisons are probabilistic. Assume that for each pair of objects $i, j$, the dirty comparison between them yields an incorrect result with probability $\eta_{ij}$. Algorithm~\ref{alg:relational_algo_1} can be directly applied in this setting, achieving the same guarantees by defining $\eta_i = \sum_j\eta_{ij}$. The proof remains unchanged.

If repeatedly querying the same dirty comparison multiple times yields independent results, the number of clean comparisons can be further reduced: Let $\epsilon_{ij} := \min\{\eta_{ij}, 1/2)\}$. When querying a dirty comparison \(2k\) times, the probability that the correct answer fails to secure a majority vote is at most \((4\epsilon_{ij}(1-\epsilon_{ij}))^k\): For $\eta_{ij} \ge 0.5$, this bound is trivial. Otherwise, there are $2^{2k}$ strings of length $2k$ over the alphabet $\{\text{correct},\text{incorrect}\}$, and each string that is at least half incorrect has probability at most $(\epsilon_{ij}(1-\epsilon_{ij}))^k$.

So by repeating each dirty comparison query $2k$ times, we obtain an algorithm that performs $O(kn \log n)$ dirty comparisons and $O\left(\sum_i \log\left(\sum_j(4\epsilon_{ij}(1-\epsilon_{ij}))^k\right)\right)$ clean comparisons. 


\subsection{Multiple Predictors}\label{sec:multiple}

We now discuss the setting where multiple predictors are available and prove Theorem~\ref{thm:multiple}. Suppose we have \(k\) different dirty comparison predictors. Let \(\eta^p_i\) denote the number of incorrect comparisons by predictor \(p\) for item \(i\).

We prove Theorem~\ref{thm:multiple} by reduction to the problem of ``prediction with expert advice'': In this problem, there are $k$ experts, and each incurs a loss in the range \([0,1]\) per time step. An algorithm must select an expert in each round before the losses are revealed and then incurs the loss of the chosen expert. According to \cite[Equation(9)]{FREUND1997119}, their algorithm \textsc{Hedge} has an expected loss of \(O(L + \log(k))\), where \(L\) is the total loss of the best expert in hindsight.

In our case, the experts correspond to the predictors, and time steps correspond to the $n$ iterations of the for-loop of Algorithm~\ref{alg:relational_algo_1} where an item is inserted into the BST. We define the loss of expert \(p\) in the time step where $a_i$ ought to be inserted by $\ell_i^p=\log(1 + \tilde\eta_i^p) / \log(n)$, where $\tilde\eta_i^p\le \eta_i^p$ is the number of incorrect comparisons of predictor $p$ between $a_i$ and the items already in the BST at this time. Note that the value of $\tilde\eta_i^p$ can be determined at the end of the time step (once $a_i$ is correctly placed in the BST) without any additional clean comparisons; the division by $\log n$ ensures that \(\ell_i^p\in [0,1]\) as required. 

The algorithm for multiple predictors proceeds as Algorithm~\ref{alg:relational_algo_1}, querying for dirty comparisons at a given time step the predictor $p$ corresponding to the expert chosen by \textsc{Hedge} at that time step. Recall from the analysis in Section~\ref{sec:dirtyAnalysis} that the expected number of clean comparisons in this time step is then $O(\log(1+\tilde\eta_i^p))$, which is an $O(\log n)$ factor larger than the loss suffered by \textsc{Hedge}. Consequently, by the $O(L+\log k)$ bound on the cost of \textsc{Hedge}, the total expected number of clean comparisons is \(O(\min_p \sum_i \log(1+\eta_i^p) + \log(k)\log(n))\). Since $k \leq 2^{O(n / \log n)}$, the term \(\log(k) \log(n) = O(n)\) is negligible, and Theorem~\ref{thm:multiple} follows.

\section{Sorting with Positional Predictions}
In this section, we propose two algorithms that are capable of leveraging positional predictions. The first one has complexity bounds in terms of the displacement error measure, and the second one has complexity bounds in terms of $\eta^l$ and $\eta^r$, the one-sided error measures. Each algorithm is effective in some tasks, where the given prediction is accurate with respect to its corresponding error measure.
\subsection{Displacement Sort}

We present a sorting algorithm with positional prediction, whose comparison and time complexity rely solely on the displacement error of the predictor.
The algorithm is adapted from Local Insertion Sort \citep{local_insertion_sort_85}, an adaptive sorting algorithm proven to be optimal for various measures of presortedness.

We use a classic data structure called finger tree \citep{finger_tree_first_paper},  
which is a balanced binary search tree (BBST) equipped with a pointer (``finger'') pointing towards the last inserted vertex. 
When a new value $v$ is to be inserted, rather than searching for insertion position from the root, the insertion position is found by moving the finger from the last inserted vertex $u$ to the suitable new position. 
By the balance property of BBST, the insertion can be performed in $O(\log d(u,v))$ amortized time, where $d(u, v)$ is the number of vertices in the tree whose value lies in the closed interval from $u$ to $v$.



Algorithm \ref{sorting_algo_2} details the proposed method. We first bucket sort (in time $O(n)$) the items in $A$ based on their predicted positions, such that we may assume for all $i < j$ that $\hat p(i) \leq \hat p(j)$. Following the rearranged order, items in $A$ are sequentially inserted into an initially empty finger tree $T$. 
After all insertions, we obtain the exactly sorted array by an inorder traversal of $T$.

\begin{algorithm}[H]
\caption{Sorting with complexity on $\eta^\Delta_i$}\label{sorting_algo_2}
\KwIn{$A=\langle a_1,\dots,a_n\rangle$, prediction $\hat p$}
BucketSort$(A, \hat p)$; \Comment{Bucket Sort $A$ according to $\hat p$, so that $\hat p(1)\le \hat p(2)\le \dots\le \hat p(n)$}\\
$T \gets$ an empty one-finger tree\;
\For{$i = 1, \ldots, n$}{
    Insert $a_i$ into $T$\;
}
\Return nodes in $T$ in sorted order (via inorder traversal)\;
\end{algorithm}

\begin{proof}[Proof of Theorem~\ref{thm_Sorting_with_positional_prediction_displacement}]
We focus on the insertion process of each item $a_{i}$ in Algorithm~\ref{sorting_algo_2}. Let $d_i$ denote the number of nodes between $a_{i}$ and $a_{i-1}$ in an inorder traversal of the tree after inserting $a_{i}$, including themselves. 
These nodes must have their correct ranking in the final sorted list between $p(i-1)$ and $p(i)$; hence
\begin{align*}
    d_i \leq |p(i) - p({i - 1})| + 1, \text{for all } i = 2,\ldots, n.
\end{align*}
Therefore, the running time and number of comparisons among all insertions are bounded by
\begin{align*}
\sum_{i = 2}^{n} O(\log (d_i)) 
& \leq \sum_{i = 2}^{n} O(\log (|p(i) - p({i - 1})| + 1)) \\
& \leq \sum_{i = 2}^{n} O(\log(\left| p(i) - \hat p(i)\right| + \left|p({i - 1}) - \hat p({i - 1})\right| + \hat p(i) - \hat p({i - 1}) + 1)) \\
& \leq \sum_{i = 2}^{n} O(\log(3 \cdot \max\{\eta^\Delta_i + 1, \eta^\Delta_{i-1} + 1, \hat p(i) - \hat p({i - 1}) + 1\})) \\
& \leq O(n) + \sum_{i = 1}^{n} O(\log(\eta^\Delta_i + 1)) \\
&\leq O\left(\sum_{i = 1}^{n} \log(\eta^\Delta_i + 2)\right). 
\end{align*}

The second inequality is by $\hat p(i-1)\le \hat p(i)$ and the triangle inequality; the third inequality is by monotonicity of logarithm. The penultimate inequality is justified by $\log(\max\{x,y,z\}) \le \log x +\log y + \log z$ for all $x,y,z\ge 1$ and
\begin{align*}
    \sum_{i = 2}^{n} \log(\hat p(i) - \hat p({i - 1}) + 1) \leq \sum_{i = 2}^{n} \left(\hat p(i) - \hat p({i - 1})\right) \leq n.
\end{align*}

This concludes the proof of Theorem~\ref{thm_Sorting_with_positional_prediction_displacement}.
\end{proof}

\subsection{Double-Hoover Sort}
\label{section:double-hoover-algorithm}

Now we turn our focus to settings where one-sided errors are small. We first describe a simple algorithm with comparison complexity as a function of \emph{either $\eta^l$ or $\eta^r$}. Following this, we introduce a two-sided algorithm, the Double-Hoover Sort, that has comparison complexity as claimed in Theorem~\ref{thm_Sorting_with_positional_prediction_oneside}. Both algorithms begin by bucket sorting $A$ with respect to the positional prediction in $O(n)$ time, breaking ties arbitrarily. Subsequently, it can be assumed that $A$ is rearranged such that $\forall i < j, \hat p(i) \leq \hat p(j)$.

\paragraph{A First Approach.} A left-sided sorting complexity of $O(\sum_{i=1}^n \log(2 + \eta^l_i))$ can be easily achieved using the standard technique of learning-augmented binary search, as described in \citep{CompetitiveCaching, algo_with_predictions}.
Specifically, a sorted array $L$ is maintained, and $a_1, \ldots, a_n$ are sequentially inserted into $L$. During each insertion, we perform a learning-augmented binary search starting from the rightmost position of $L$, taking  $O(\log(\eta^l_i + 2))$ comparisons to find the correct insertion position.
In total, $O(\sum_i\log(\eta^l_i + 2))$ comparisons are taken among all insertions.
By replacing the array $L$ with an appropriate data structure (e.g., a BBST with a finger that always returns to the rightmost element), one can achieve the same bound also for time complexity.
A reversed ``right-sided'' version of this algorithm achieves complexity $O(\sum_i\log(\eta^r_i + 2))$. By simultaneously running the left-sided and right-sided algorithms, one can achieve the complexity bound of $O\left(\min\left\{\sum_{i=1}^n \log\left(2 + \eta^l_i), \sum_{i=1}^n \log(2 + \eta^r_i\right)\right\} \right)$. 
However, moving the $\min$ operator inside the summation requires more elaborate approach. 


\paragraph{Double-Hoover Sort.} The basic idea is that, to utilize a similar insertion scheme to that employed in the one-sided algorithm, we maintain two sorted structures $L$ and $R$ at the same time, and insert each item into one of them, depending on which operation is faster, hereby achieving a complexity bound of $O(\log(\min(\eta_i^l, \eta_i^r)+2))$. Then, a final sorted list is attained by merging $L$ and $R$ in linear time. 
However, a significant issue yet to be addressed is how to decide the insertion order of different items.

Consider two items $a_u$ and $a_v$ with $u < v$ (so $\hat p(u)\le \hat p(v)$ by Bucket Sort). In our algorithm, if both are to be inserted into $L$, it is crucial that $a_u$ is inserted prior to $a_v$. Otherwise, the insertion complexity of $a_u$ could exceed the bound of $\log(\eta^l_u + 2)$. Conversely, if both $a_u$ and $a_v$ are to be inserted into $R$, then $a_v$ should be inserted prior to $a_u$. Since we cannot predict whether an item will be inserted into $L$ or $R$, formulating an appropriate insertion order that respects the constraints on both sides is impossible.


We tackle this issue of insertion order with a \textit{strength-based, $\log n$-rounds insertion scheme}. 
Intuitively, we think of $L$ and $R$ as two ``hoovers'', with their ``suction power'' increasing simultaneously over time. Each item (as ``dust'') is extracted from the array and inserted into one hoover once the suction power reaches the required strength: a hoover with suction power $\delta$ is able to absorb items that can be inserted into it with $O(\log \delta)$ comparisons.
Details are illustrated in Algorithm~\ref{alg:sorting_algo_3}. 

\begin{algorithm}
\caption{Double-Hoover Sort} \label{alg:sorting_algo_3}
\KwIn{$A=\langle a_1,\dots, a_n\rangle$, prediction $\hat p$}
BucketSort$(A, \hat p)$; 
\Comment{Bucket Sort $A$ according to $\hat p$, so that $\hat p(1)\le \hat p(2)\le \dots\le \hat p(n)$}\\ \label{ln:bucketsort}
$L, R \gets  \langle \rangle $\;
\For{ $\delta = 2^0, 2^1, \ldots, 2^{\lceil \log n \rceil}$}{
    \For{$i = 1, \ldots, n$\  \textbf{if} $a_i$ has not been inserted}
    {
        $L^{<i} \gets \left\{a_j \in L: j < i\right\}$\;
        $l^i_\delta \gets$ \lIf{$|L^{<i}| < \delta$}
                            {
                                $-\infty$ \textbf{else} $\delta$th largest item in $L^{<i}$
                            }
        \If{$a_i > l^i_\delta$}
            { \label{ln:clean1}
            Insert $a_i$ into $L$ by binary search, starting on interval $\{x \in L \colon l^i_\delta \leq x \leq l^i\}$, \\
            where $l^i:= \min_{\delta'<\delta} l_{\delta'}^i$}
    }
    \For{$i = n, \ldots, 1$\  \textbf{if} $a_i$ has not been inserted}
    {
        $R^{>i} \gets \left\{a_j \in R: j > i\right\}$\;
        $r^i_\delta \gets$ \lIf{$|R^{>i}| < \delta$}
                            {
                                $\infty$ \textbf{else} $\delta$th smallest item in $R^{>i}$
                            }
        \If{$a_i < r^i_\delta$}
        { \label{ln:clean2}
            Insert $a_i$ into $R$ by binary search, starting on interval $\{x \in R \colon r^i \leq x \leq r^i_\delta\}$, \\
            where $r^i:= \max_{\delta'<\delta} r_{\delta'}^i$}
    }
}
\Return merge(L, R)\;
\end{algorithm}


The sorted structures $L$ and $R$ can be implemented by arrays, though alternative data structures could provide better time complexity. 


We conduct insertions in $\lceil \log n \rceil$ rounds, setting $\delta$ to be $1, 2,4, \ldots, 2^{\lceil \log n \rceil}$. 
In each round, we iterate over $a_1, \ldots, a_n$ to decide if they should be inserted to $L$ in the current round. 
Then, we iterate over $a_n, \ldots, a_1$ reversely, to decide if they should be inserted to $R$ in the current round. Note that if an item is inserted into either $L$ or $R$, it is omitted in later rounds. The insertion process of an item in one round is depicted in Figure~\ref{fig:Two-sided}.

To decide whether $a_i$ should be inserted into $L$ with strength $\delta$, 
let $l^i_\delta$ denote the $\delta$th largest item in $L$ with index smaller than $i$, representing the ``boundary value'' in this round. If there are less than $\delta$ eligible items in $L$, set $l^i_\delta$ to be $-\infty$.
Let $l^i$ represent $\min_{\delta'<\delta} l_{\delta'}^i$, the minimum boundary value in previous rounds.
If $l^i_\delta$ is smaller than $a_i$, we employ binary search to insert $a_i$ into $L$, starting with the interval $\{x \in L \colon l^i_\delta \leq x \leq l^i\}$.
Conversely, to decide whether $a_i$ should be inserted into $R$, we adopt a symmetrical approach as depicted in lines 11 to 15 of Algorithm~\ref{alg:sorting_algo_3}. 



After all insertion rounds, we merge $L$ and $R$ in linear time to obtain the sorted result.


\paragraph{Correctness.} The correctness of the Double-Hoover Sort arises from the invariance that both $L$ and $R$ remain sorted after all insertions. 
It is sufficient to show that the initial insertion intervals at line 8 and line 14 always cover the value of $a_i$. At line 8, $l^i_\delta < a_i$ holds trivially by conditioning. Since $a_i$ is not inserted into $L$ in any previous round, $a_i < l^i_{\delta'}$ for all $\delta' < \delta$. Since the minimum operation preserves inequality, $a_i < l^i$ also holds. A similar argument can be made for the interval at line 14. 

\begin{figure}
    \centering
    \includegraphics[width=0.90\textwidth]{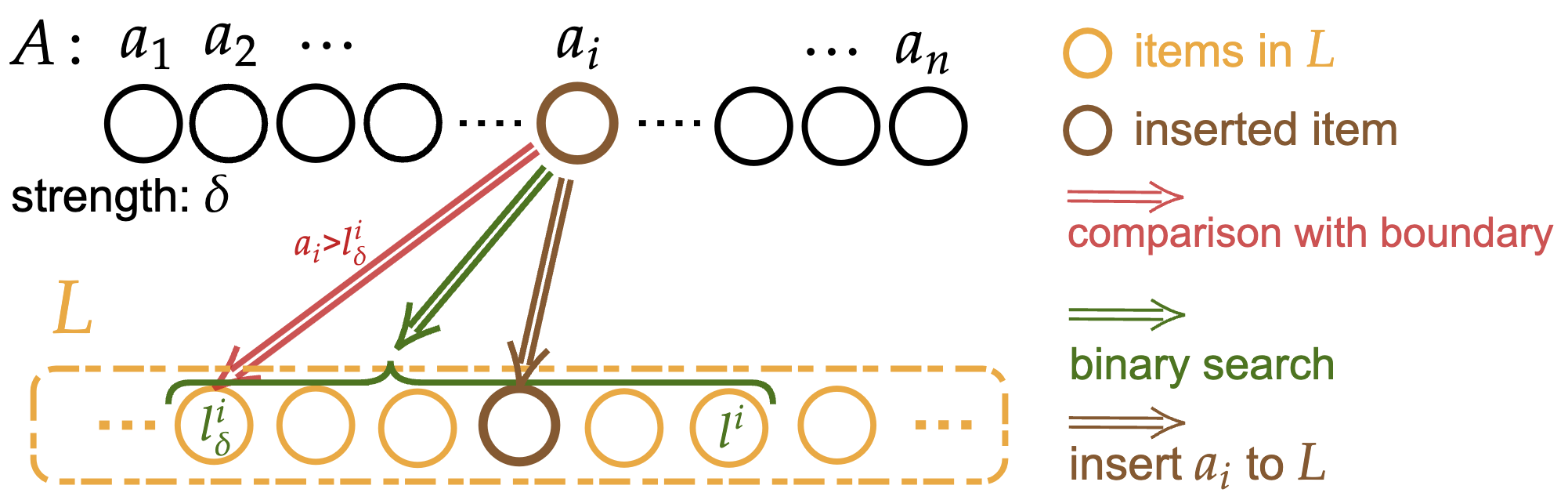}
    \caption{An example of the insertion process in the Double-Hoover sort. 
    }
    \label{fig:Two-sided}
\end{figure}

To discern the comparison complexity of the proposed algorithm, we prove the following lemma.

\begin{lem} \label{property_of_insertion}
    Upon the insertion of an item $a_i$ into $L$, all items in $L \setminus L^{<i}$ are larger than $l_i$. Similarly, when an item $a_i$ is inserted into $R$, all items in $R \setminus R^{>i}$ are smaller than $r_i$. As a result, the initial interval at line 8 and line 14 are subsets of $L^{<i}$ and $R^{<i}$, and hence have sizes no larger than $\delta$.
\end{lem}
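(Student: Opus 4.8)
The plan is to prove both halves of the lemma simultaneously (the $R$-case being symmetric to the $L$-case), and then deduce the size bound on the starting intervals as an immediate corollary. The key claim for the $L$-case is that at the moment $a_i$ is inserted into $L$ during the round with strength $\delta$, every item $a_j$ that already lies in $L$ with index $j \ge i$ satisfies $a_j > l^i$; combined with the fact that $l^i_\delta \le a_i \le l^i$, this shows that the binary-search interval $\{x \in L \colon l^i_\delta \le x \le l^i\}$ contains no item of $L \setminus L^{<i}$, hence is a subset of $L^{<i}$, whose size is at most $\delta$ by definition of $l^i_\delta$ (when $|L^{<i}| \ge \delta$ the interval lies weakly above the $\delta$th largest element; when $|L^{<i}| < \delta$ we have $l^i_\delta = -\infty$ but then $|L^{<i}| < \delta$ directly).

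First I would fix the item $a_i$ and the round $\delta$ in which it gets inserted into $L$, and consider an arbitrary $a_j \in L$ with $j \ge i$ (note $j \ne i$ so $j > i$). Such an $a_j$ was inserted into $L$ in some round $\delta'$ — either an earlier round ($\delta' < \delta$) or the current round ($\delta' = \delta$), but in the latter case $j > i$ means $a_j$ has not yet been inserted at the time we process $a_i$, so in fact $\delta' < \delta$ or ($\delta' = \delta$ and $a_j$ inserted strictly after $a_i$ in the current left-pass). I would argue that in every such case, at the time $a_j$ was inserted, $a_i$ was among the candidates counted in $L^{<j}$? — no: $i < j$, so $a_i$ has index smaller than $j$, but $a_i$ may not yet be in $L$. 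The cleaner route: use the condition under which $a_i$ was \emph{not} inserted in any round $\delta' < \delta$, namely $a_i \le l^i_{\delta'}$ for all $\delta' < \delta$, hence $a_i \le l^i$. Then for $a_j$ with $j > i$ inserted into $L$ at strength $\delta' \le \delta$: the quantity $l^j_{\delta'}$ is the $\delta'$th largest element of $L^{<j}$ at that moment. I would show $a_i$ does not yet belong to $L^{<j}$ at that moment (since $a_i$ is inserted at strength $\delta \ge \delta'$, and if $\delta' = \delta$ then $a_j$ comes after $a_i$ in the pass, but $a_i$ was inserted \emph{before} $a_j$ in that same pass — contradiction with "$a_i$ not yet in $L^{<j}$"). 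So I need to be careful: when $\delta' = \delta$ and $j > i$, $a_i$ \emph{is} already in $L$, hence in $L^{<j}$, when $a_j$ is inserted. The right invariant to carry is therefore: \emph{every element of $L$ with index $\ge i$ exceeds $\min_{\delta' < \delta} l^i_{\delta'} = l^i$}, and to prove it I would induct on the order of insertions, showing that each newly inserted $a_j$ (with $j \ge i$, inserted at strength $\delta' \le \delta$) satisfies $a_j > l^j_{\delta'} \wedge l^j \ge \dots \ge l^i$, using monotonicity of the boundary values in the round index and the fact that $L^{<j} \supseteq L^{<i}$ restricted appropriately.

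The main obstacle I anticipate is precisely this bookkeeping about \emph{which} round each relevant item of $L$ entered and reconciling the strict insertion order within a single left-pass with the "index $\ge i$" condition; one must rule out the degenerate case where some $a_j$ with $i < j$ entered $L$ with a small boundary value that could fall below $l^i$. The resolution is that if $a_j$ entered at strength $\delta' \le \delta$ then the condition $a_j > l^j_{\delta'}$ together with $l^j_{\delta'} \ge l^i_{\delta'} \ge l^i$ — where the first inequality holds because $L^{<j}$ contains all index-$<i$ elements that $L^{<i}$ does (the left-pass processes indices in increasing order, so by the time we reach $a_j$ at least everything $a_i$ saw is present) — forces $a_j > l^i$. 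Once this invariant is established, the "as a result" clause is one line: the starting interval at line 8 sits between $l^i_\delta$ and $l^i$ and hence, by the invariant applied at the insertion moment of $a_i$, contains only elements of $L^{<i}$; since $L^{<i}$ has at most $\delta$ elements weakly above its $\delta$th largest (or fewer than $\delta$ elements total), the interval has size at most $\delta$. The $R$-side follows verbatim after reversing the order and swapping $\min\leftrightarrow\max$, $<\leftrightarrow>$, $-\infty\leftrightarrow\infty$.
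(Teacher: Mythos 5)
Your argument is correct and, after stripping away the detours, is the same as the paper's: the decisive chain is $a_j > l^j_{\delta'} \ge l^i_{\delta'} \ge l^i$, where the middle inequality comes from $L^{<i}$ (at the time $a_i$ is examined in round $\delta'$) being a subset of $L^{<j}$ (at the time $a_j$ is examined in round $\delta'$). The $\delta'=\delta$ case that you keep reopening never actually arises --- any $a_j$ with $j>i$ that is already in $L$ when $a_i$ is inserted in round $\delta$ must have entered in a strictly earlier round, since the left pass of round $\delta$ handles index $i$ before index $j$ --- so you should simply assert $\delta' < \delta$ at the outset, which also matters because $l^i$ is a minimum over $\delta'' < \delta$ only, so the step $l^i_{\delta'} \ge l^i$ would fail if $\delta'$ could equal $\delta$.
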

\begin{proof}
    Assume $a_i$ is inserted into $L$ in round $\delta$. Consider any $a_j \in L \setminus L^{<i}$ at the time of $a_i$ insertion. Then, $a_j$ was inserted into $L$ previously with a smaller insertion strength $\delta' < \delta$. By the insert condition, $a_j > l^j_{\delta'}$.

    Since $L^{<i}$ is a subset of $L^{<j}$, the $\delta'$th largest item of the latter set must exist and be no smaller than the $\delta'$th largest items of the former set. Then, we obtain
\begin{align*}
    a_j > l^j_{\delta'} \geq l^i_{\delta'} \geq l^i.
\end{align*}
    Hence, the interval $\{x \in L | l^i_\delta \leq x \leq l^i\}$ only contains items in $L^{<i}$. Since there are $\delta$ items in $L^{<i}$ that are no smaller than $l^i_\delta$, the interval has at most $\delta$ items.
    An analogous proof shows the symmetric property in $R$. 
\end{proof}

Then, we can establish the comparison complexity bound of Algorithm~\ref{alg:sorting_algo_3}.

\begin{thm} 
   \label{thm:two-sided-complexity}
    For each item $a_i$, its insertion process takes $O(\log (\min\left\{\eta^l_i, \eta^r_i\right\} + 2))$ comparisons.
     
\end{thm}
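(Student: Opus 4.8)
The plan is to bound, for each item $a_i$, the insertion strength $\delta$ at which it actually gets absorbed into $L$ or $R$, and then convert that into a comparison count using Lemma~\ref{property_of_insertion}, which already tells us that each insertion uses $O(\log\delta)$ comparisons (the initial binary-search interval has size at most $\delta$). So the whole theorem reduces to showing that $a_i$ is inserted no later than round $\delta=\Theta(\eta_i^l+1)$ and, symmetrically, no later than round $\delta=\Theta(\eta_i^r+1)$; taking the smaller of the two rounds gives $O(\log(\min\{\eta_i^l,\eta_i^r\}+2))$ comparisons. I would prove the two sides separately and then combine.

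First I would analyze the left side. Fix $a_i$ and recall that after bucket sort the items are ordered so that $j<i$ implies $\hat p(j)\le\hat p(i)$. The key claim is: if $\delta\ge\eta_i^l+1$, then at the moment the $L$-loop reaches index $i$ in round $\delta$, the boundary value $l_\delta^i$ (the $\delta$th largest element currently in $L^{<i}$, or $-\infty$ if fewer than $\delta$ are present) satisfies $l_\delta^i<a_i$, so the insert condition on line~\ref{ln:clean1} fires and $a_i$ is inserted (if it has not been already). To see this, I would argue that among the indices $j<i$, the number of items $a_j$ with $a_j>a_i$ is at most $\eta_i^l$: indeed if $j<i$ then $\hat p(j)\le\hat p(i)$, and $a_j>a_i$ means $p(j)>p(i)$, so such a $j$ contributes to the set defining $\eta_i^l$ (using that $\hat p(j)\le\hat p(i)$ and $p(j)>p(i)$ — one should double-check whether the $\le$ versus $<$ tie-breaking in the definition of $\eta_i^l$ and in the bucket sort match up, and if not, absorb the off-by-one into the ``$+2$''). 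Hence at most $\eta_i^l\le\delta-1$ items of $L^{<i}$ exceed $a_i$; so either $|L^{<i}|<\delta$, giving $l_\delta^i=-\infty<a_i$, or the $\delta$th largest element of $L^{<i}$ is $\le a_i$, and a short argument (strictness / the fact that earlier rounds already would have caught $a_i$ if $l_\delta^i=a_i$ were possible) gives $l_\delta^i<a_i$. Therefore $a_i$ is absorbed into $L$ by round $\delta^* := 2^{\lceil\log(\eta_i^l+1)\rceil}=\Theta(\eta_i^l+1)$ at the latest. By Lemma~\ref{property_of_insertion}, that insertion costs $O(\log\delta^*)=O(\log(\eta_i^l+2))$ comparisons. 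The symmetric argument on the reversed loop gives that $a_i$ is absorbed into $R$ by round $\Theta(\eta_i^r+1)$, costing $O(\log(\eta_i^r+2))$ comparisons, \emph{if it is not absorbed into $L$ first}.

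To combine the two bounds I would observe that $a_i$ is inserted in the first round $\delta$ in which \emph{either} the $L$-condition or the $R$-condition triggers, and this happens no later than $\min\{\Theta(\eta_i^l+1),\Theta(\eta_i^r+1)\} = \Theta(\min\{\eta_i^l,\eta_i^r\}+1)$; since within a round the $L$-loop runs before the $R$-loop, if the triggering round is the left-bound one it goes into $L$, otherwise into $R$ — either way the binary-search interval at the time of insertion has size at most $\delta=\Theta(\min\{\eta_i^l,\eta_i^r\}+1)$ by Lemma~\ref{property_of_insertion}, so the insertion takes $O(\log(\min\{\eta_i^l,\eta_i^r\}+2))$ comparisons. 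That is exactly the claimed bound.

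The main obstacle I anticipate is the precise bookkeeping in the claim ``$\delta\ge\eta_i^l+1 \Rightarrow l_\delta^i<a_i$''. There are two subtleties: (i) the boundary $l_\delta^i$ is defined with respect to the items \emph{currently} in $L^{<i}$, and $L^{<i}$ grows over rounds as more small-index items get absorbed, so one must make sure the bound ``at most $\eta_i^l$ items of $L^{<i}$ exceed $a_i$'' holds at the relevant moment — but this is fine since $L^{<i}$ is always a subset of $\{a_j: j<i\}$, and the count of $j<i$ with $a_j>a_i$ is a fixed quantity $\le\eta_i^l$ independent of the round; (ii) the strict-versus-nonstrict issue between $l_\delta^i\le a_i$ and $l_\delta^i<a_i$, and the matching tie-break conventions in the definition of $\eta^l$ (which uses $\hat p(j)\le\hat p(i)$) versus the arbitrary tie-breaking in bucket sort. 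I would handle (ii) by noting that $a_i$ being distinct from all other items (strict linear order) means $l_\delta^i=a_i$ is impossible, so $l_\delta^i\le a_i$ already gives $l_\delta^i<a_i$; and any residual mismatch in tie-breaking changes the relevant count by at most an additive constant, harmlessly absorbed into the ``$+2$'' inside the logarithm. Everything else — the $O(\log\delta)$ cost per insertion — is handed to us directly by Lemma~\ref{property_of_insertion}.
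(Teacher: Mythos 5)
Your proposal is correct and follows essentially the same route as the paper: bound the round $\delta$ by which $a_i$ must be absorbed (via the count of $j<i$ with $a_j>a_i$ being at most $\eta^l_i$, and symmetrically for $R$), then invoke Lemma~\ref{property_of_insertion} to bound the binary-search interval by $\delta$. The one thing you gloss over, which the paper handles explicitly, is that there is also a cost to the ``exploration phase'': in each rejected round the algorithm performs one clean comparison of $a_i$ against $l^i_\delta$ (and one against $r^i_\delta$), so you should add that since $a_i$ participates in only $O(\log\delta_i)$ rounds before insertion, these per-round tests also sum to $O(\log(\min\{\eta^l_i,\eta^r_i\}+2))$; your bound on the insertion round already gives this for free, but the accounting should be stated.
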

\begin{proof}    
    Each item $a_i$ goes through some rejected insertions in earlier rounds, and then gets inserted into $L$ or $R$ in a certain round. We refer to them as the \textit{exploration phase} and \textit{insertion phase}, respectively, and prove that the number of comparisons needed in both phases is bounded by $O(\log (\min\left\{\eta^l_i, \eta^r_i\right\} + 2))$.

    First, we claim that each $a_i$ is inserted into either $L$ or $R$ prior to or during the round with insertion strength $\delta^l_i:=2^{\lceil \log(\eta^l_i+2) \rceil}$. If $a_i$ is inserted before round $\delta^l_i$, the claim trivially holds. The claim also holds if at round $\delta^l_i$, $L^{<i}$ contains fewer than $\delta^l_i$ items. In the absence of these conditions,
    \begin{align*}
\left|\left\{a_j \in L^{<i}: a_j > a_i\right\}\right| 
&= \left|\left\{a_j \in L: j < i \land a_j > a_i \right\}\right| \\
&\leq \left|\left\{j \in [n]: \hat p(j) \le \hat p(i) \land p(j) > p(i) \right\}\right| \\
&= \eta^l_i < \delta^l_i.
    \end{align*}
    Hence, in round $\delta^l_i$, $a_i$ must be larger than the boundary value, which is the $\delta^l_i$th largest item in $L^{<i}$. Consequently, it will be inserted into $L$ in round $\delta^l_i$.
    
    A similar argument shows that $a_i$ will be inserted prior to or during round $2^{\lceil \log(\eta^r_i+2)\rceil}$. Combining these two bounds, we find that $a_i$ must be inserted prior to or during round $\delta_i:=2^{\lceil \log(\min\left\{\eta^l_i, \eta^r_i\right\}+2)\rceil} $. Hence, the exploration phase only needs $O(\log(\min\left\{\eta^l_i, \eta^r_i\right\} + 2))$ comparisons.

    
    Next, we continue to examine the number of comparisons needed in the insertion phase. Suppose $a_i$ is inserted into $L$ in some round $\delta \leq \delta^l_i$. Then, the binary search starts with an interval of size
    \begin{align*}
        |\{x \in L \colon l^i_\delta \leq x \leq l^i\}| \leq \delta \leq \delta^l_i = O(\min\left\{\eta^l_i, \eta^r_i\right\} + 2).
    \end{align*}
    The first inequality is due to Lemma~\ref{property_of_insertion}. Hence, the insertion phase of $a_i$ by binary search needs $O(\log (\min\left\{\eta^l_i, \eta^r_i\right\} + 2))$ comparisons.

\end{proof}

\begin{proof}[Proof of Theorem~\ref{thm_Sorting_with_positional_prediction_oneside}]

Theorem~\ref{thm_Sorting_with_positional_prediction_oneside} can be obtained from Theorem~\ref{thm:two-sided-complexity}, by summing up the number of comparisons in the insertion process of each $a_i$.
\end{proof}

\section{Lower Bounds on Comparison Complexity}
\label{appendix:lowerbounds}

We prove the lower bounds stated in Theorem~\ref{thm:1.4}.

\subsection{Optimality of Displacement Sort}
In this section, we show that an exact sorting algorithm, augmented with a positional prediction, cannot sort the array with $o(\sum_{i \in [n]} \log(\eta^\Delta_i))$ comparisons.

\begin{defn}
Given $n$, a positional prediction $\hat p$, and a real number $U$, define the size of the $U$-candidate set as
\begin{align*}
\cand(\hat p, U) := \left|\{A \in S_n: \sum_{i=1}^n \log(\eta^\Delta_i + 2) \leq U\}\right|,    
\end{align*}
where $S_n$ is the set of permutations of $[n]$ (viewed as an array), $\eta^\Delta$ is calculated accordingly for each $A$ against $\hat p$.
\end{defn}


\begin{thm} \label{thm_no_o}
Given any function $f(U) = o(\max_{\hat p \in S_n}\log (\cand(\hat p, U)))$, there does not exist any positional augmented sorting algorithm with comparison complexity $O(f(U))$ for instances with $\sum_{i=1}^n \log(\eta^\Delta_i + 2) \leq U$.
\end{thm}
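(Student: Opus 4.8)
\textbf{Proof plan for Theorem~\ref{thm_no_o}.} The plan is to use a counting (information-theoretic) argument. Any deterministic comparison-based sorting algorithm that always halts after at most $C$ comparisons can be viewed as a binary decision tree of depth at most $C$, hence with at most $2^C$ leaves; since each leaf must correspond to exactly one permutation consistent with the comparisons made along that root-to-leaf path, the algorithm can correctly sort at most $2^C$ distinct input arrays. Now fix $n$ and choose $\hat p$ to be a prediction achieving $\log(\cand(\hat p, U)) = \max_{\hat q} \log(\cand(\hat q, U))$. If an algorithm had comparison complexity $O(f(U))$ on all instances with $\sum_i \log(\eta_i^\Delta + 2) \le U$, then in particular it must correctly sort every one of the $\cand(\hat p, U)$ arrays in the $U$-candidate set for this $\hat p$, using at most $C := O(f(U))$ comparisons on each. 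The decision-tree bound forces $2^C \ge \cand(\hat p, U)$, i.e. $C \ge \log \cand(\hat p, U) = \max_{\hat q}\log\cand(\hat q, U)$. But $C = O(f(U)) = o(\max_{\hat q}\log\cand(\hat q, U))$ by hypothesis, a contradiction for large enough $n$ (equivalently, large enough $U$).

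First I would state the decision-tree lemma precisely: a comparison-based algorithm using at most $C$ comparisons on each input in a set $\mathcal{S}$ of arrays can distinguish at most $2^C$ of them, so if it sorts all of $\mathcal{S}$ correctly then $|\mathcal{S}| \le 2^C$. For randomized algorithms one can either restrict attention to deterministic algorithms (which suffices for a worst-case lower bound) or fix the random tape to obtain the best deterministic specialization. Then I would observe that the $U$-candidate set $\{A \in S_n : \sum_i \log(\eta_i^\Delta(A,\hat p) + 2) \le U\}$ is precisely a set of instances on which the hypothesized algorithm is required to use $O(f(U))$ comparisons, apply the lemma, and take logarithms to reach the inequality $O(f(U)) \ge \log\cand(\hat p, U)$ for the maximizing $\hat p$. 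This contradicts $f(U) = o(\max_{\hat p}\log\cand(\hat p,U))$.

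The main obstacle is not really the counting argument itself — which is the classical $\Omega(n\log n)$ lower bound adapted to a restricted family of inputs — but rather making the quantifiers line up cleanly. Specifically, I need the lower bound to hold as a genuine asymptotic statement in the complexity parameter $U$, so I must be careful that $\max_{\hat p \in S_n}\log\cand(\hat p, U)$ is the right quantity to compare against, that $U$ is allowed to grow (through growing $n$), and that the adversary is permitted to pick both $\hat p$ and then the worst array in its candidate set. A secondary point worth noting is that the bound is only meaningful once one shows in the next step that this max is actually large — e.g. that there exist $\hat p$ and ranges of $U$ for which $\max_{\hat p}\log\cand(\hat p,U) = \Omega(\sum_i \log(\eta_i^\Delta+2))$ matches the upper bound of Theorem~\ref{thm_Sorting_with_positional_prediction_displacement} — but that is the job of the explicit construction that follows, not of this theorem, which is purely the reduction from sorting-lower-bounds to candidate-set-counting.
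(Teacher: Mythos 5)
Your proposal is correct and uses the same information-theoretic decision-tree counting argument as the paper: any algorithm making at most $C$ comparisons can distinguish at most $2^C$ candidate permutations, so it must use at least $\log \cand(\hat p, U)$ comparisons for the maximizing $\hat p$. Your write-up is somewhat more explicit about the decision-tree formalism and the handling of quantifiers and randomization, but the underlying idea is identical to the paper's proof.
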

\begin{proof}
Given the predictor $\hat p$ and an upper bound $U$ on the error, a sorting algorithm needs to determine the correct permutation $A$ from a candidate set of size $\cand(\hat p, U)$.  If the algorithm only uses $x$ comparisons, then it can only distinguish 
 $2^x$ different outcomes; if  $2^x$ is smaller than the number of candidates, it cannot determine the correct answer in every situation, since the information given by comparisons is not sufficient to distinguish all candidates. Hence, at least $\lceil \log \cand(\hat p, U) \rceil$ comparisons are needed.
\end{proof}

\begin{thm} \label{thm_Omega_U}
For all $n\le U \le O(n \log n)$, we have $\max_{\hat p \in S_n}\log \cand(\hat p, U)) = \Omega(U)$. 
\end{thm}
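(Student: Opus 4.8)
The plan is to exhibit a single predictor that is ``close'' to exponentially many permutations. Take $\hat p=\mathrm{id}$. A permutation array $A=\langle a_1,\dots,a_n\rangle$, viewed as a permutation of $[n]$, sorts to the identity, so $p(i)=a_i$ and hence $\eta^\Delta_i=|i-a_i|$; consequently any $A$ that permutes values only within blocks of consecutive indices is automatically a low-error instance for $\hat p$.

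Fix a block size $b\ge 2$ and split $[n]$ into $\lfloor n/b\rfloor$ consecutive blocks (the leftover block of size smaller than $b$, if any, is fixed pointwise by $A$), and let $\cF$ be the set of permutation arrays that map each block onto itself. For $A\in\cF$ we have $|i-a_i|\le b-1$ for all $i$, so $\sum_{i}\log(\eta^\Delta_i+2)\le n\log(b+1)$. I would therefore choose $b$ as large as possible subject to $n\log(b+1)\le U$; this gives $b=\Theta(2^{U/n})$, capped at $b=n$ once $U$ is large enough that $n\log(n+1)\le U$ (in which case $\cF$ is all of $S_n$). With this choice $\cF$ lies in the $U$-candidate set, so $\cand(\hat p,U)\ge|\cF|=(b!)^{\lfloor n/b\rfloor}$.

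It remains to show $\log|\cF|=\Omega(U)$. When $b=n$ this is $\log(n!)=\Theta(n\log n)=\Omega(U)$, using $U=O(n\log n)$. Otherwise Stirling gives $\log(b!)=\Theta(b\log b)$, hence $\log|\cF|=\lfloor n/b\rfloor\log(b!)=\Theta(n\log b)$, which equals $\Theta(U)$ because $\log b=\Theta(U/n)$. This last step is only valid once $b$ exceeds a fixed constant, i.e. for $U$ at least a suitable constant multiple of $n$; in the remaining sliver $U=\Theta(n)$ I would instead set $b=2$ and count transpositions directly -- swapping $m$ of the $\lfloor n/2\rfloor$ disjoint pairs keeps the error at $n\log 2+2m\log\tfrac32$, which is $\le U$ for $m$ a suitable constant fraction of $n$, and there are $\binom{\lfloor n/2\rfloor}{m}=2^{\Omega(n)}=2^{\Omega(U)}$ such permutations. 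The main obstacle is exactly this uniformity over the full range $n\le U\le O(n\log n)$: the block construction and the Stirling count are routine, but one must pick $b$ so that the within-block permutations provably stay within the budget $U$ while remaining numerous, and at the bottom of the range -- where $\cand(\hat p,U)$ is necessarily small -- the asymptotic estimate must be replaced by an explicit count with attention to constants.
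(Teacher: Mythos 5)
Your construction is essentially identical to the paper's: take $\hat p=\mathrm{id}$, partition $[n]$ into consecutive blocks of size roughly $2^{U/n}$, permute freely within each block, and apply Stirling to the count $(b!)^{n/b}$. You handle the boundary cases (small $U=\Theta(n)$, and the cap $b\le n$ when $U=\Theta(n\log n)$) more carefully than the paper, which glosses over the distinction between $\log\eta^\Delta_i$ and $\log(\eta^\Delta_i+2)$ and so only literally places its permutations in the $O(U)$-candidate set rather than the $U$-candidate set; but the underlying idea and the resulting $\Omega(U)$ bound are the same.
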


\begin{proof}
Proof by construction. Assume without loss of generality that $U$ is a multiple of $n$ and let $U' = U / n$. Take $\hat p = \langle 1, 2, \ldots, n\rangle$, the identity prediction. We aim to construct sufficient number of candidate permutations $A \in S_n$, which all fall in the $U$-candidate set. 

Consider every permutation $A \in S_n$ constructed in the follow way: Initially, set $A = \langle 1, 2, \ldots, n\rangle$. Then, divide $A$ into $\frac{n}{2^{U'}}$ adjacent subarrays, each containing $2^{U'}$ items (the last subarray is potentially smaller if there are not enough remaining elements). Finally, we permute the items in each subarray arbitrarily, and retrieve $A$ after the permutation.

In each possible outcome $A$, $\log \eta^\Delta_i \leq U'$ for all $i$ since each item is only permuted locally. Hence, the error of $\hat p$ w.r.t. $A$ is no larger than $n \cdot U' = U$. Each $A$ falls inside the $U$-candidate set. Counting the number of possible different outcomes in our construction, we obtain
\begin{align*}
    \cand(\hat p, U) \geq \left[\left(2^{U'}\right)!\right] ^ \frac{n}{2^{U'}},
\end{align*}
where the right-hand-side represents the possible ways to permute $2^{U'}$ items in each subarray.

By Stirling's formula,
\begin{align*}
    \log \cand(\hat p, U) \geq \Omega\left( \frac{n}{2^{U'}} \cdot (2^{U'} \cdot U')\right) = \Omega(U).
\end{align*}
\end{proof}

Directly combining Theorem~\ref{thm_no_o} and Theorem~\ref{thm_Omega_U}, and noting that at least $\Omega(n)$ comparisons are always needed to verify correctness of a sorted list, we obtain the following:
\begin{cor}
For sorting with positional predictions, there exists no algorithm with comparison complexity $o(\sum_{i=1}^n \log(\eta^\Delta_i + 2))$.
\end{cor}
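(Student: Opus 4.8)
The plan is to combine the two preceding theorems in the natural way and then patch up a small gap that arises because the Corollary is stated for every single instance while Theorem~\ref{thm_no_o} only rules out algorithms whose complexity is $O(f(U))$ for a function $f$ that is $o(\max_{\hat p}\log\cand(\hat p,U))$. First I would invoke Theorem~\ref{thm_Omega_U}, which tells us that for all $n\le U\le O(n\log n)$ the quantity $\max_{\hat p\in S_n}\log\cand(\hat p,U)$ is $\Omega(U)$. Consequently any function $f$ with $f(U)=o(U)$ also satisfies $f(U)=o(\max_{\hat p}\log\cand(\hat p,U))$, so Theorem~\ref{thm_no_o} applies to it: there is no positional-prediction sorting algorithm with comparison complexity $O(f(U))$ on instances with $\sum_i\log(\eta_i^\Delta+2)\le U$.

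Next I would translate a hypothetical $o(\sum_i\log(\eta_i^\Delta+2))$ algorithm into a contradiction with the above. Suppose such an algorithm $\mathcal{A}$ exists. Fix $n$ and let $U$ range over multiples of $n$ in the interval $[n,\,c\,n\log n]$ for the appropriate constant $c$. On the family of instances constructed in the proof of Theorem~\ref{thm_Omega_U} (identity prediction, each block of size $2^{U/n}$ permuted arbitrarily), every instance has $\sum_i\log(\eta_i^\Delta+2)\le U$, so $\mathcal{A}$ uses $o(U)$ comparisons on each of them; hence $\mathcal{A}$ has comparison complexity $O(f(U))$ on this parametrized family for some $f(U)=o(U)$. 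But we just argued no such algorithm can exist, a contradiction. The role of the remark ``at least $\Omega(n)$ comparisons are always needed to verify correctness'' is to handle the regime $U=\Theta(n)$ (equivalently, near-perfect predictions): there $\sum_i\log(\eta_i^\Delta+2)=\Theta(n)$, and since one always needs $\Omega(n)$ comparisons just to certify a sorted order, an $o(n)=o(\sum_i\log(\eta_i^\Delta+2))$ algorithm is impossible for this reason directly.

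I expect the only real subtlety — and the step I would be most careful with — is the quantifier juggling: Theorem~\ref{thm_no_o} is phrased about a fixed function $f$ and an $O(f(U))$ bound uniformly over instances of error at most $U$, whereas ``comparison complexity $o(\sum_i\log(\eta_i^\Delta+2))$'' is a per-instance asymptotic statement as $n\to\infty$. The clean way to bridge this is to note that the lower-bound instances in Theorem~\ref{thm_Omega_U} come in a two-parameter family indexed by $(n,U)$ with $\sum_i\log(\eta_i^\Delta+2)$ essentially equal to $U$ on the nose, so any per-instance complexity bound must in particular hold along this family, and along it the claimed $o(\sum_i\log(\eta_i^\Delta+2))$ bound becomes exactly an $o(U)$ bound of the form covered by Theorem~\ref{thm_no_o}. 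Everything else is immediate from the cited results, so no further calculation is needed.
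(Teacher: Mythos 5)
Your proposal is correct and follows essentially the same route as the paper, which dispatches the corollary in a single sentence (``Directly combining Theorem~\ref{thm_no_o} and Theorem~\ref{thm_Omega_U}, and noting that at least $\Omega(n)$ comparisons are always needed to verify correctness of a sorted list''). Your write-up fleshes out the contradiction argument and correctly flags the quantifier issue — that the corollary is a per-instance asymptotic claim while Theorem~\ref{thm_no_o} speaks of a uniform $O(f(U))$ bound — and resolves it by restricting attention to the explicit family from Theorem~\ref{thm_Omega_U}.

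One small imprecision worth noting: the claim that the instances in that family have ``$\sum_i\log(\eta_i^\Delta+2)$ essentially equal to $U$ on the nose'' is not literally true — the family is the set of \emph{all} block-local permutations, and some members (e.g., the identity) have $\sum_i\log(\eta_i^\Delta+2)=n\ll U$ when $U/n$ is large. But this does not affect the argument. The information-theoretic content of Theorem~\ref{thm_no_o} produces a bad instance $A^*$ with $\sum_i\log(\eta_i^\Delta+2)\le U$ on which the algorithm must spend $\Omega(U)$ comparisons; since $\sum_i\log(\eta_i^\Delta+2)\le U$, the ratio of comparisons to $\sum_i\log(\eta_i^\Delta+2)$ on $A^*$ is $\Omega(U)/U=\Omega(1)$, and since $\sum_i\log(\eta_i^\Delta+2)\ge n\to\infty$ along the family, this is enough to rule out an $o(\cdot)$ bound — regardless of whether $\sum_i\log(\eta_i^\Delta+2)$ on $A^*$ is close to $U$ or much smaller.
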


\subsection{Optimality of Dirty Comparisons Sort and Double-Hoover Sort}
We can use the same definition of $\cand(\hat p, U)$ and construction as above. Every positional prediction can be view as a total linear relation on array $A$, therefore induces a unique dirty-comparison predictor. 
Since each element is locally perturbed, 
we can prove that for each permutation $A$ obtained from the construction, $\sum_{i=1}^n \log(\eta_i + 2)$ and $\sum_{i=1}^n \log(\min\left\{\eta^l_i, \eta^r_i\right\} + 2)$ are also upper bounded by $U$.
Hence, the optimality of Algorithm \ref{sorting_algo_2} and \ref{alg:sorting_algo_3} can be proven in the exact same way, and Theorem~\ref{thm:1.4} follows.
\section{Experiments}
In this section, we conduct experiments both on synthetic data, crafted to simulate predictions in real-world settings, and also on real-world data of countries' population ranking. The source code used for experiments is available at \url{https://github.com/xingjian-bai/learning-augmented-sorting}.

We assess the performance of our proposed sorting algorithms against five well-established baselines. Quick Sort and Merge Sort are classic sorting algorithms with $O(n \log n)$ complexity; Tim Sort \citep{TimSort} is a popular hybrid sorting algorithm designed to perform efficiently on real-world datasets and widely adopted in standard libraries. Further, we choose two adaptive sorting algorithms, Odd-Even Straight Merge Sort \citep{survey_on_adaptive_sorting} and Cook-Kim division \citep{cook1980best}, which are proven to be optimal with respect to several measures of disorderness. They serve as adaptive variants of Merge Sort and Quick Sort. To apply adaptive sorting algorithms in positional prediction settings, we first execute bucket sort on the items by their predicted ranking, breaking ties arbitrarily. This ``sorted-by-prediction'' array is then inputted into the baselines.

\paragraph{Positional Predictions.}

First, we elaborate our synthetic data generation process. In many sorting tasks, items belong to different ``grades'', which represent a coarse version of the ranking. For example, students are classified into grade A, B, C, and D based on their exam scores; with their grades in hand, we want to find out their accurate ranking. We denote this scenario as the \textit{class setting}.
Specifically, we divide an array of $n$ items into $c$ classes, sampling the thresholds $t_0=0\le t_1<t_2<\dots,t_{c}=n$ uniformly at random. 
Then, for items $a_i$ with $t_{k - 1} < i \leq t_k$, we say that they belong to the $k$th class, and their predicted position is uniformly generated from $(t_{k - 1}, t_k]$.

To model the tasks where we have an ``outdated'' ranking, we design the \textit{decay setting}. The accurate ranking is obtained as the prediction at time $0$. Then, during each time step, one item is randomly selected to be perturbed: its predicted position is shifted by 1, towards either left or right, with uniform probability. We then ask the sorting algorithms to retrieve the original ranking of items based on the prediction at each time step.

We also utilize data from a real-world setting. We draw the annual population ranking of countries and smaller regions from 1960 to 2010 from \cite{population_web}. Then, we feed in the ranking in year $x = 1960, \ldots, 2010$ respectively as the prediction, and ask the sorting algorithm to predict the ranking in year $2010$.

\begin{figure}[ht] 
  \centering
  \begin{subfigure}{0.45\textwidth}
    \includegraphics[width=\textwidth]{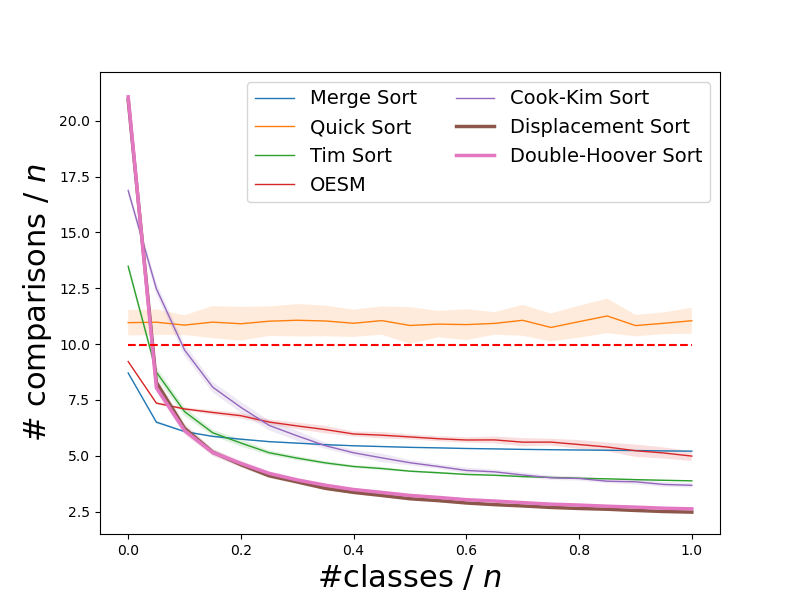}
     \caption{class setting, $n = 1,000$}
  \end{subfigure}
  \hspace{5mm}
  \begin{subfigure}{0.45\textwidth}
    \includegraphics[width=\textwidth]{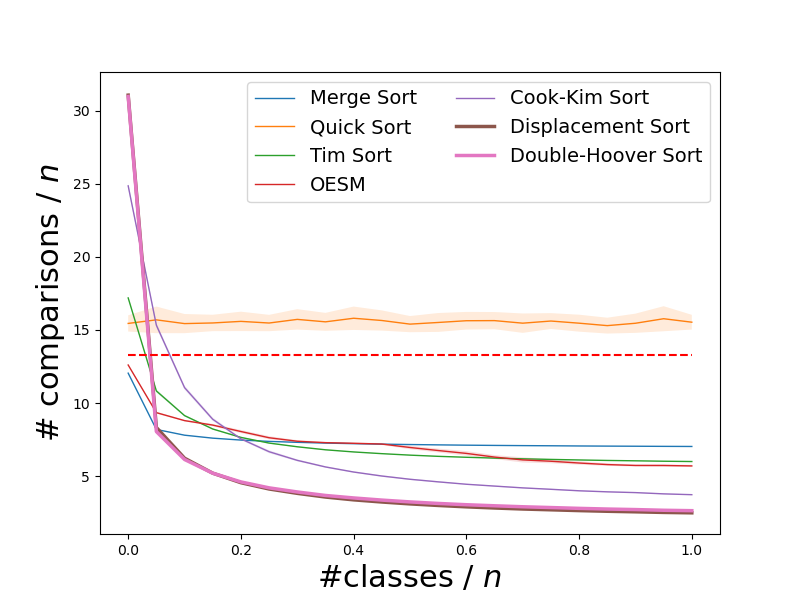}
     \caption{class setting, $n = 10,000$}
  \end{subfigure}

  \begin{subfigure}{0.45\textwidth}
    \includegraphics[width=\textwidth]{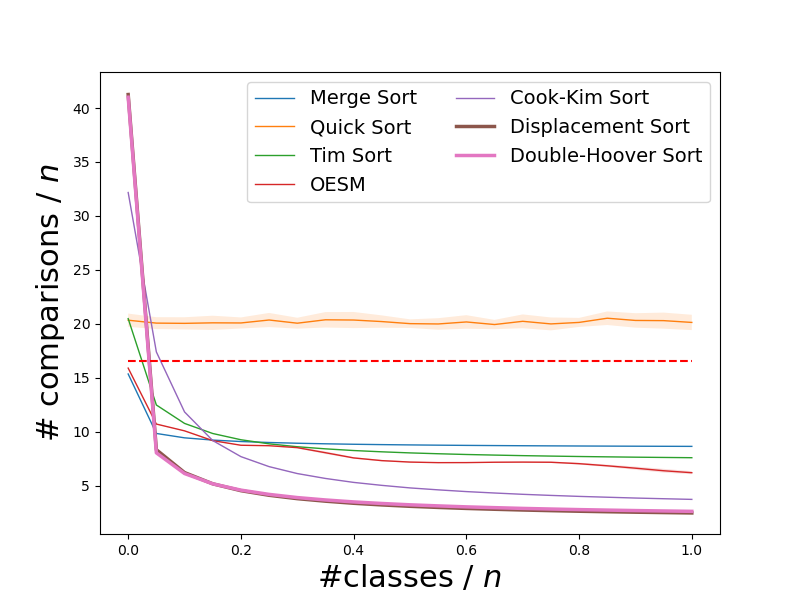}
     \caption{class setting, $n = 100,000$}
  \end{subfigure}
  \hspace{5mm}
  \begin{subfigure}{0.45\textwidth}
    \includegraphics[width=\textwidth]{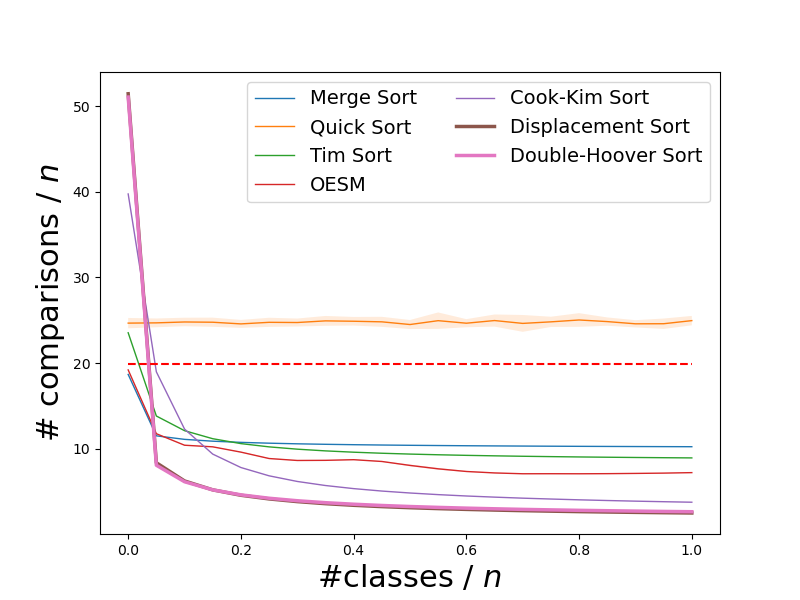}
     \caption{class setting, $n = 1,000,000$}
  \end{subfigure}
  \caption{Class Settings.}
\end{figure}
  
\begin{figure} [ht]

  \begin{subfigure}{0.45\textwidth}
    \includegraphics[width=\textwidth]{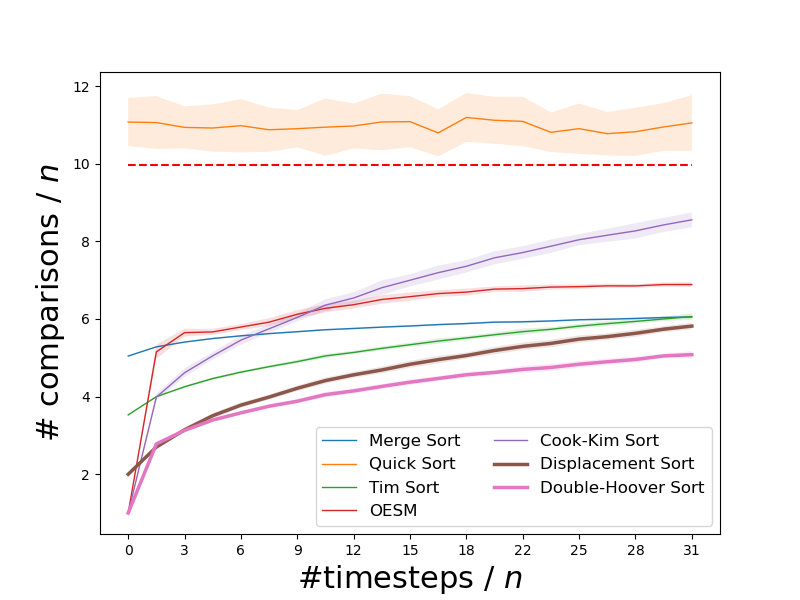}
     \caption{decay setting, $n = 1,000$}
  \end{subfigure}
  \hspace{-10mm}
  \begin{subfigure}{0.45\textwidth}
    \includegraphics[width=\textwidth]{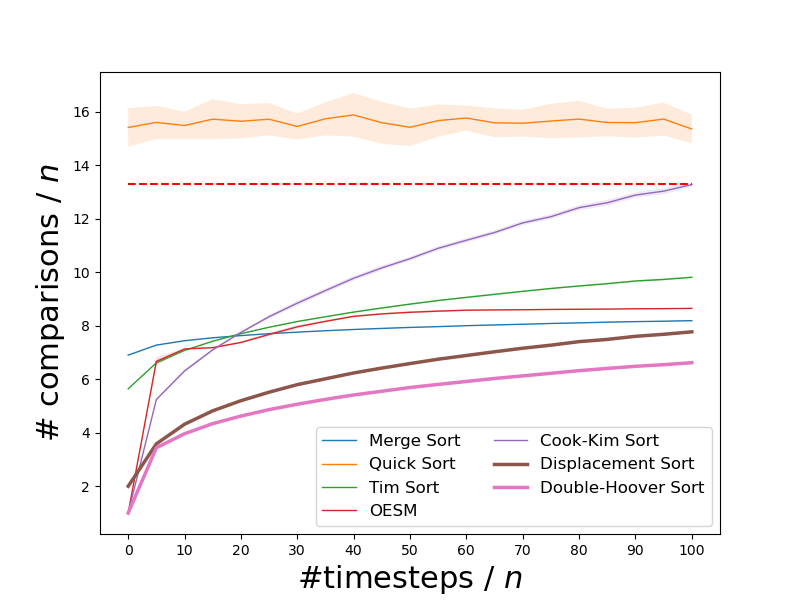}
     \caption{decay setting, $n = 10,000$}
  \end{subfigure}
  \begin{subfigure}{0.45\textwidth}
    \includegraphics[width=\textwidth]{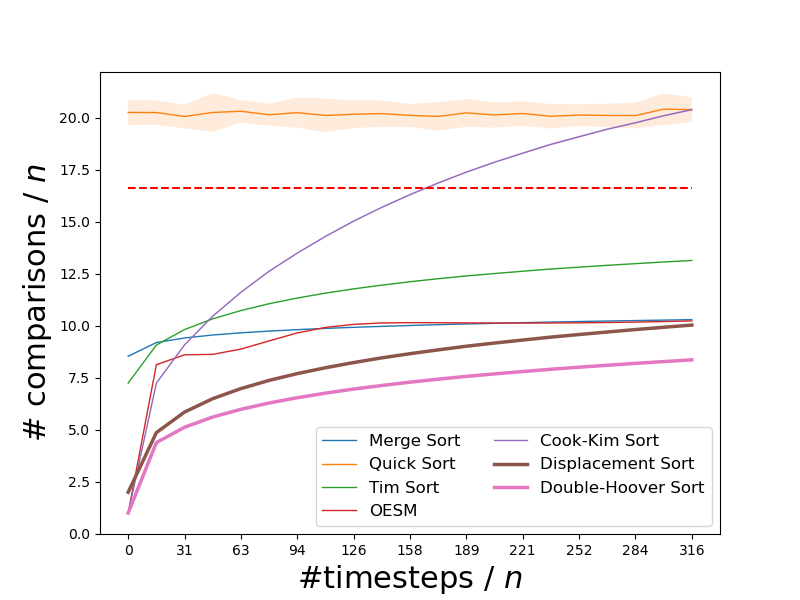}
     \caption{decay setting, $n = 100,000$}
  \end{subfigure}
  \hspace{13mm}
  \begin{subfigure}{0.45\textwidth}
    \includegraphics[width=\textwidth]{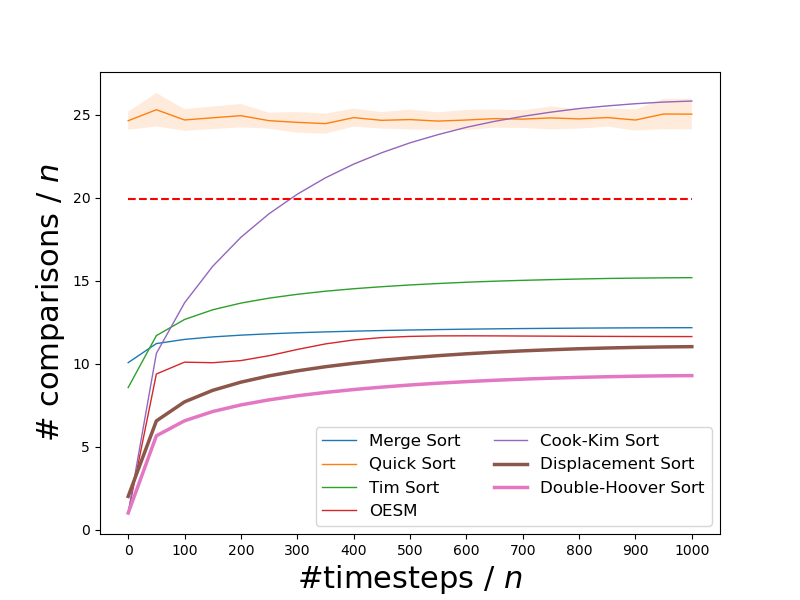}
     \caption{decay setting, $n = 1,000,000$}
  \end{subfigure}
  \caption{Decay Settings.}
\end{figure}
\begin{figure} [ht]
\centering
  \begin{subfigure}{0.60\textwidth}
    \centering
    \includegraphics[width=\textwidth]{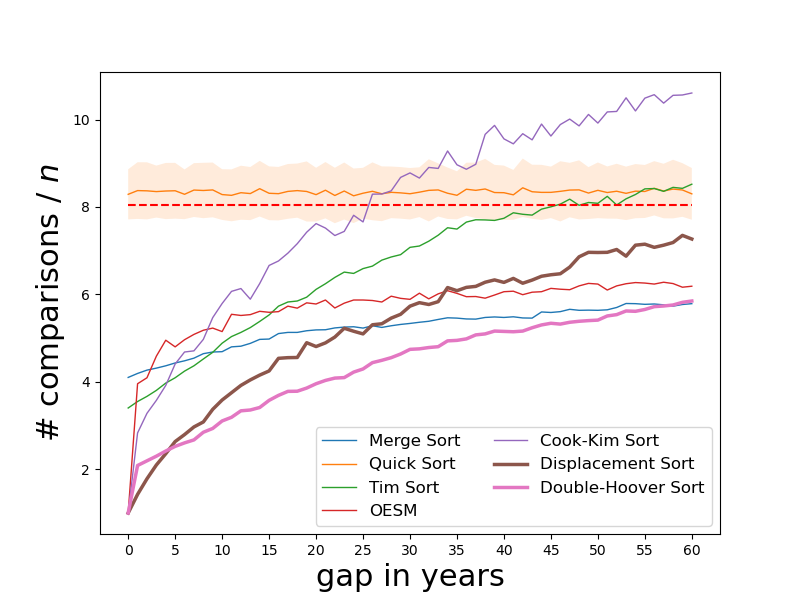}
  \end{subfigure}
  \caption{Country population ranking, $n = 261$.}
  \label{figure_positional}
\end{figure}
In all the plots, the X-axis indicates the quality of predictions, and the Y-axis indicates the number of comparisons used. The red dotted line is $n \log_2 n$. The bold curves represent the proposed algorithms. All experiments are repeated 30 times, with the standard deviation indicated by shade. A scapegoat tree implementation of Double-Hoover Sort is used for synthetic settings, while an array implementation is used for population ranking given the small sample size.

As depicted in Figure~\ref{figure_positional}, our algorithms consistently outperform the baselines in all settings with various task sizes. Specifically, in the class setting with $n = 1,000,000$, Displacement Sort and Double-Hoover Sort outperform all baselines when the number of classes is larger than $0.05n$. 
In the decay setting, both our algorithms perform better than the others as time progresses.
In the real-world dataset, country population ranking, Displacement Sort needs the fewest comparisons when the given prediction is within 5 years, and Double-Hoover Sort dominates the rest when the prediction is obtained 6 to 60 years ago. These experiments illustrate that the proposed algorithms can leverage positional predictions more effectively than traditional adaptive and non-adaptive sorting algorithms in a variety of settings. 

\paragraph{Dirty Comparisons.}
In some sorting scenarios, some ``indicating factors'' can be used to cheaply compare two items. For instance, in biology, we can compare the binding affinities of two molecules for a specific target protein and provide information about their potential efficacy as drugs. However, comparisons based on indicating factors may have error induced by element-wise noise. Hence, we consider a two dirty-comparison settings in which a ratio $r$ of items is damaged. We say a dirty comparison is \textit{perturbed} if its outcome is uniformly random. In the \textit{Good-Dominating setting}, a dirty comparison between two items is perturbed if both are damaged; in the \textit{Bad-Dominating setting}, a dirty comparison between two items is perturbed if either item is damaged.

In dirty comparisons settings, we use the 3-approximation 
feedback arc set algorithm proposed by \cite{FAS_tournament} to preprocess the dirty comparisons. This algorithm uses $O(n \log n)$ dirty comparisons, 
the same order of magnitude as our Dirty-Clean Sort, to construct a positional prediction that roughly aligns with the given dirty comparisons. Then, we feed in the induced positional prediction to the baselines.
\begin{figure}[!htbp]

  \begin{subfigure}{0.48\textwidth}
    \includegraphics[width=\textwidth]{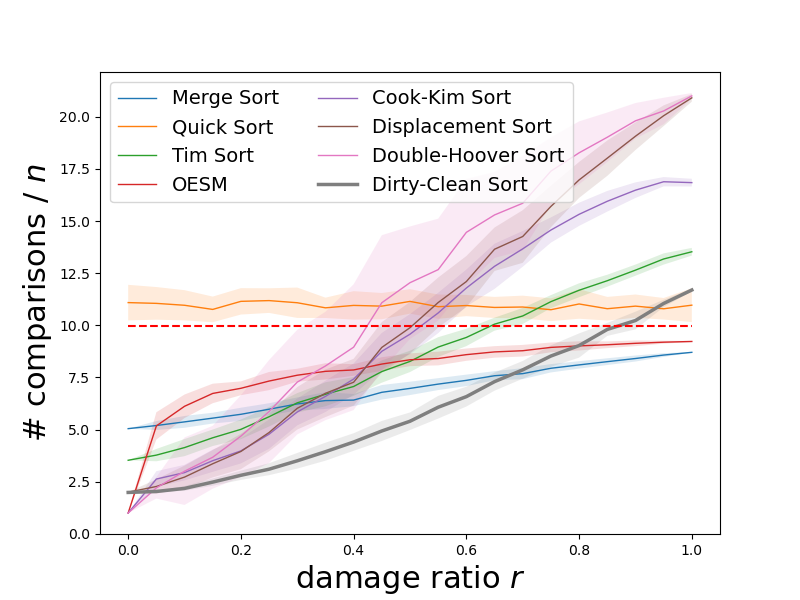}
     \caption{good-dominating, $n = 1,000$}
    
  \end{subfigure}
  \begin{subfigure}{0.48\textwidth}
    \includegraphics[width=\textwidth]{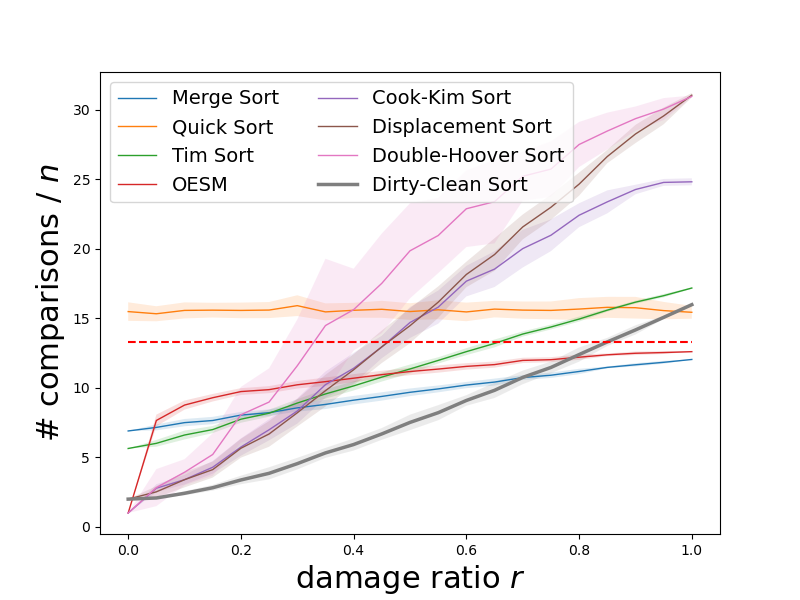}
     \caption{good-dominating, $n = 10,000$}
    
  \end{subfigure}
  \begin{subfigure}{0.48\textwidth}
    \includegraphics[width=\textwidth]{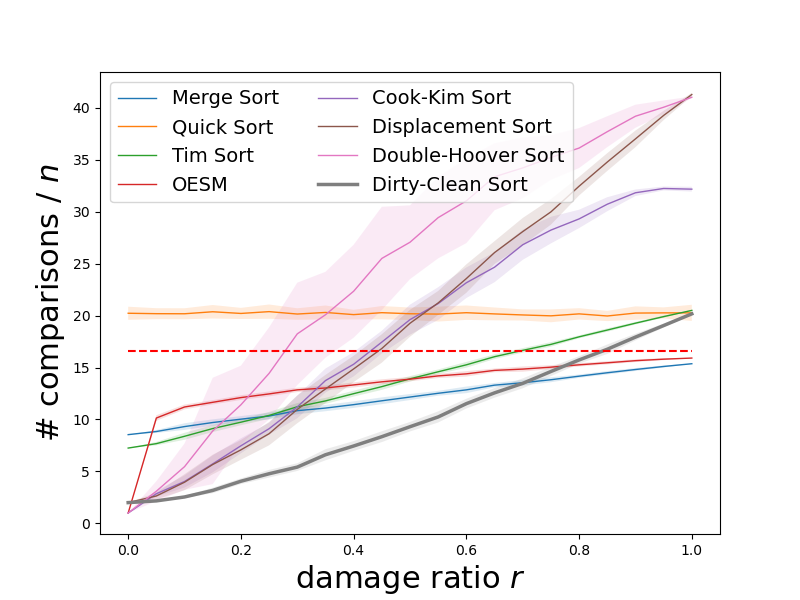}
     \caption{good-dominating, $n = 100,000$}
  \end{subfigure}
  \centering
  \begin{subfigure}{0.48\textwidth}
    \includegraphics[width=\textwidth]{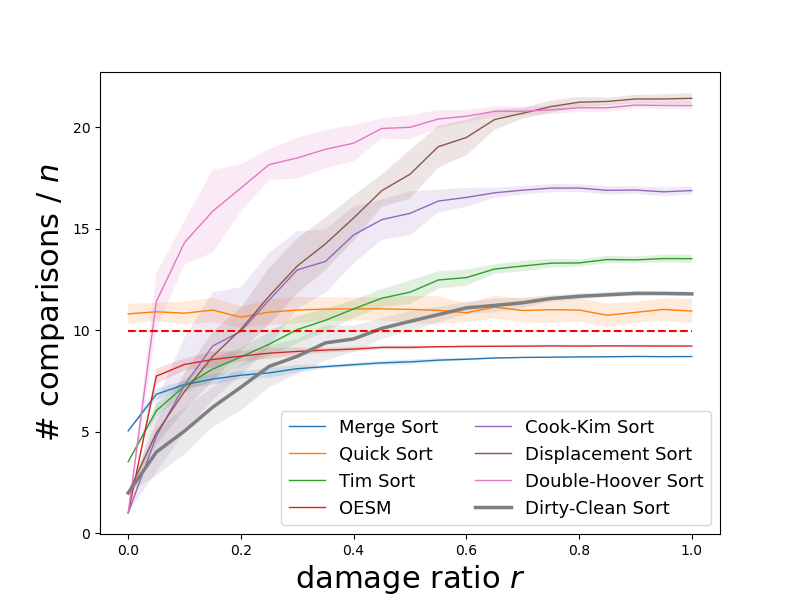}
     \caption{bad-dominating, $n = 1,000$}
  \centering
    
  \end{subfigure}
  \begin{subfigure}{0.48\textwidth}
    \includegraphics[width=\textwidth]{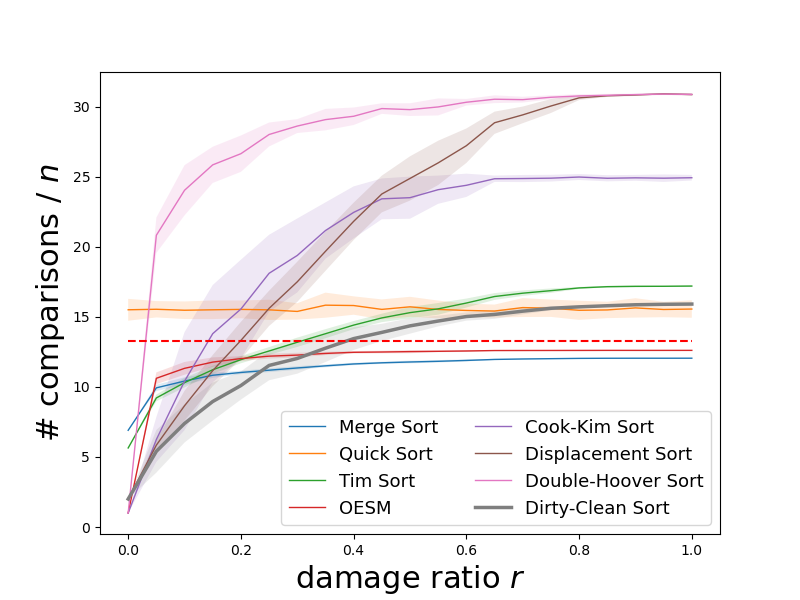}
     \caption{bad-dominating, $n = 10,000$}
  \centering
    
  \end{subfigure}
  \begin{subfigure}{0.48\textwidth}
    \includegraphics[width=\textwidth]{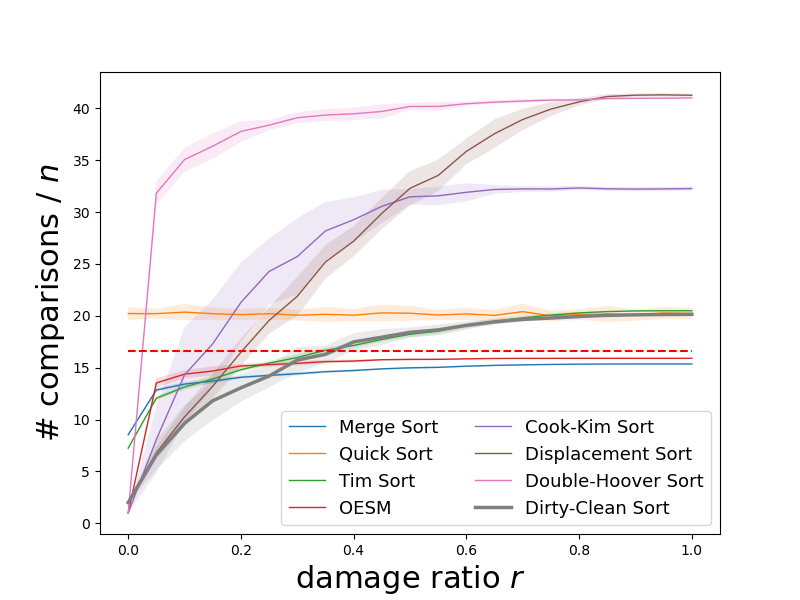}
     \caption{bad-dominating, $n = 100,000$}
    
  \end{subfigure}
  \caption{Sorting with dirty comparisons, good- and bad-dominating settings}
\label{figure_dirty_comparison}  
\end{figure}

\label{appendix:extra-experiments}

As showcased in Figure~\ref{figure_dirty_comparison}, when $n = 100,000$, in the Good-Dominanting setting, our proposed Dirty-Clean Sort outperforms baselines when $r \le 0.7$; in the Bad-Dominating setting, it outperforms other algorithms when $r \le 0.25$. If the damage ratio is large, the prediction becomes chaotic, but it still performs essentially no worse than Quick Sort as discussed in Remark~\ref{rmk:constantFactor}.


\section{Limitations and Future Work}
In the dirty-clean setting, our algorithm still requires a time complexity of $O(n \log n)$ due to the processing of $O(n \log n)$ dirty comparisons. Consequently, the algorithm is more appropriate for situations where exact comparisons are expensive than for those where comparisons are fast. In the positional prediction setting, Displacement Sort achieves a bound of $O(\sum_i\log(\eta_i^\Delta+2))$ for both comparison \emph{and} time complexity, whereas the Double-Hoover sort achieves its guarantee $O(\sum_i\log(\min\{\eta_i^l,\eta_i^r\}+2))$ only for comparison complexity. 
An intriguing question is whether the latter bound can be achieved for time complexity as well. Another potential limitation is that predictions might not be learnable in some sorting settings; future work could focus on exploring the conditions under which predictions are learnable. 

\paragraph{Acknowledgments.} We thank the anonymous reviewers at NeurIPS and Luke Melas-Kyriazi for their valuable comments.


\bibliography{sample}


\end{document}